\documentclass{llncs}
\usepackage{todonotes}
\usepackage{xspace}
\usepackage{enumerate}
\usepackage{graphicx}
\usepackage{graphics}
\usepackage{amsmath,amssymb}

\usepackage{thm-restate}

\newcommand{\ortho}{\textsc{RepExt(ortho)}\xspace}
\newcommand{\topo}{\textsc{RepExt(top)}\xspace}
\newcommand{\port}{\textsc{RepExt(top+port)}\xspace}
\newcommand{\eps}{\varepsilon}

\begin{document}

\title{Extending Partial Orthogonal Drawings}

\author{Patrizio Angelini\inst{1} \orcidID{0000-0002-7602-1524} \and Ignaz Rutter\inst{2}\orcidID{0000-0002-3794-4406} \and Sandhya T P\inst{2} \orcidID{0000-0002-7745-3935}}
\authorrunning{P. Angelini \and I. Rutter \and Sandhya TP}

\institute{John Cabot University, Rome, Italy\\ \email{pangelini@johncabot.edu}
  \and Universität Passau, 94032 Passau, Germany\\
  \email{\{rutter, thekkumpad\}@fim.uni-passau.de}}

\date{}

\maketitle
\begin{abstract}
  We study the planar orthogonal drawing style within the framework of
  partial representation extension.  Let $(G,H,\Gamma_H)$ be a partial
  orthogonal drawing, i.e., $G$ is a graph, $H\subseteq G$ is a
  subgraph and $\Gamma_H$ is a planar orthogonal drawing of $H$. 
  
  We show that the existence of an orthogonal drawing~$\Gamma_G$ of $G$ that
  extends $\Gamma_H$ can be tested in linear time.  If such a drawing
  exists, then there also is one that uses $O(|V(H)|)$ bends per edge.
  On the other hand, we show that it is NP-complete to find an
  extension that minimizes the number of bends or has a fixed number
  of bends per edge.
  \keywords{Planar Orthogonal Drawing \and Partial Representation Extension \and Bend Minimization}
\end{abstract}

\section{Introduction}

One of the most popular drawing styles are \emph{orthogonal drawings},
where vertices are represented by points and edges are represented by
chains of horizontal and vertical segments connecting their endpoints.
Such a drawing is \emph{planar} if no two edges share an interior point.  An
interior point of an edge where a horizontal and a vertical segment meet
is called a \emph{bend}.  The main aesthetic criterion for planar
orthogonal drawings is the number of bends on the edges.

A large body of literature is devoted to optimizing the number of
bends in planar orthogonal drawings.  The complexity of the problem
strongly depends on the particular input.  If the combinatorial
embedding can be chosen freely, then it is NP-complete to decide
whether there exists a drawing without bends~\cite{gt-ccurp-01}.  If
the input graph comes with a fixed combinatorial embedding, then a
bend-optimal drawing that preserves the given embedding can be
computed efficiently by a classical result of
Tamassia~\cite{t-emn-87}.  A recent trend has been to investigate
under which conditions the variable-embedding case becomes tractable.
For \mbox{maxdeg-3} graphs a bend-optimal drawing can be computed
efficiently~\cite{dlv-sood-98}, which has recently been improved to
linear time~\cite{dlop-oodp3-20}.  The problem is also FPT with
respect to the number of degree-4 vertices~\cite{dl-codve-98}, and if
one discounts the first bend on each edge, an optimal solution can be
computed even for individual convex cost functions on the
edges~\cite{br-oogdc-16,blr-ogdie-16}.  We refer to the
survey~\cite{dg-popda-13} for further references.

In light of this popularity and the existence of a strongly developed
theory, it is surprising that the planar orthogonal drawings have not
been investigated within the framework of partial representation
extension.  Especially so, since it has been considered in the related
context of simultaneous representations~\cite{accdd-sop-16}.

In the partial representation extension problem, the input graph $G$
comes together with a subgraph $H \subseteq G$ and a representation
(drawing)~$\Gamma_H$ of $H$.  One then seeks a drawing~$\Gamma_G$ of
$G$ that \emph{extends}~$\Gamma_H$, i.e., whose restriction to $H$
coincides with $\Gamma_H$.  The partial representation extension
problem has recently been considered for a large variety of different
types of representations.  For planar straight-line drawings, it is
NP-complete~\cite{p-epsld-06}, whereas for topological drawings there
exists a linear-time algorithm~\cite{adfjkpr-tppeg-15} as well as a
characterization via forbidden substructures~\cite{jkr-kttpp-13}.
Moreover, it is known that, if a topological drawing extension exists,
then it can be drawn with polygonal curves such that each edge has a
number of bends that is linear in the complexity
of~$\Gamma_H$~\cite{partial}.  Here the complexity of
$\Gamma_H$ is the number of vertices and bends in~$\Gamma_H$.  Most
recently the problem has been investigated in the context of
1-planarity~\cite{eghkn-ep1pd-20}.  Besides classical drawing styles,
it has also been studied for  contact representations~\cite{contactrep} and for geometric intersection representations,
e.g., for (proper/unit) interval
graphs~\cite{kkosv-eprig-17,kkors-eprpu-17}, chordal
graphs~\cite{kkos-eprsc-15}, circle graphs~\cite{cfk-eprcg-19}, and
trapezoid graphs~\cite{kw-eprtg-17}.

\begin{figure}[tb]
  \centering
  \includegraphics[page=2]{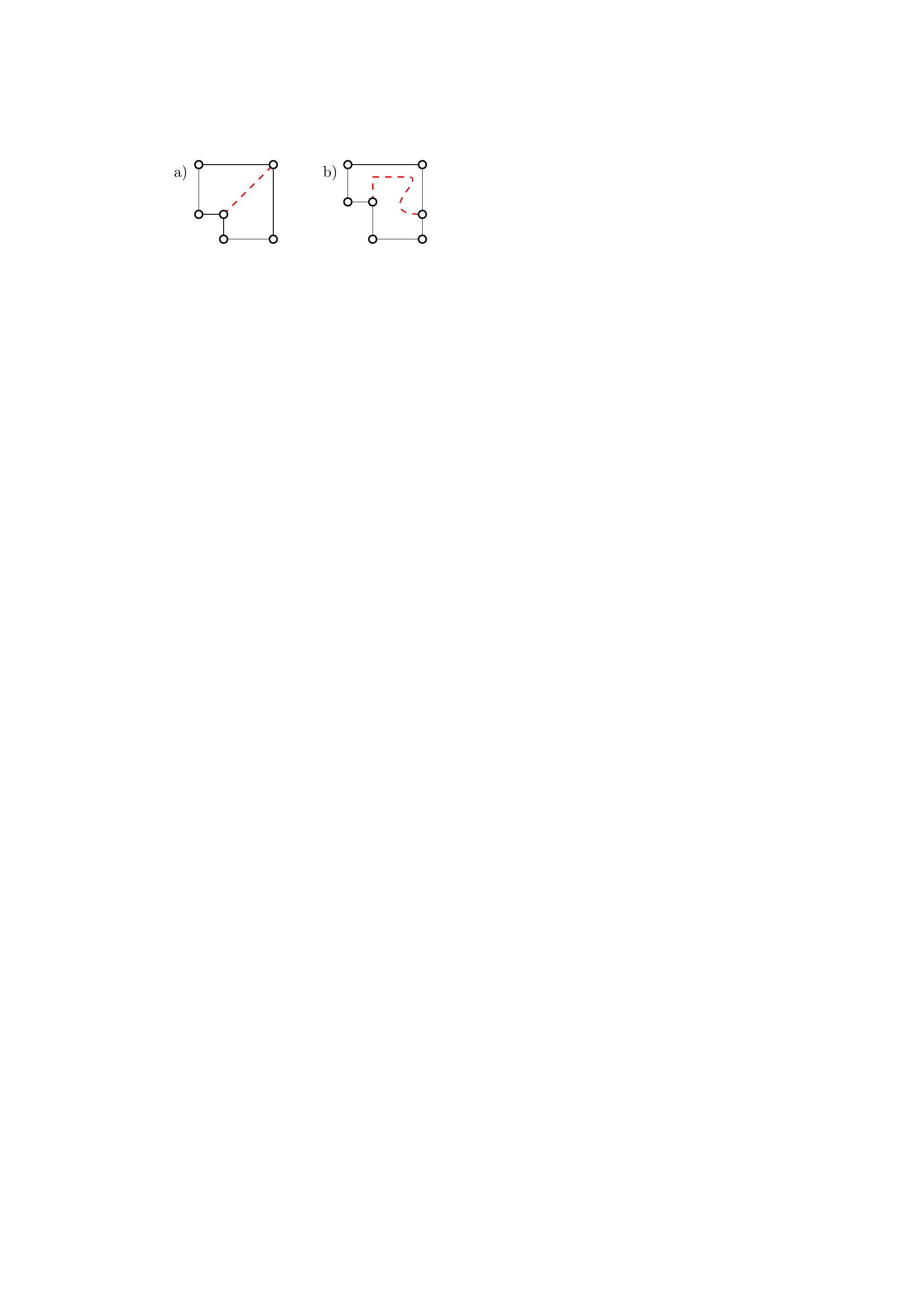}
%
    \caption{An instance of the partial representation extension problem $(G,H,\Gamma_H)$ is given. The graph $H$ is
    solid black and the edges of $E(G) \setminus E(H)$ are dashed red. (a) $(G,H,\Gamma_H)$ admits a
    planar extension, but not an orthogonal extension.  (b) $(G,H,\Gamma_H)$ admits an orthogonal extension with no bends (c) An orthogonal representation of $G$ (the curved part of the dashed
    edge has no bends) that extends the description of the solid black
    drawing of $H$.  There exists no drawing of $G$ with this
    representation that extends the given drawing of $H$.}
\label{fig:examples}
\end{figure}

In this paper, we provide an in-depth study of partial representation
extension problems for the orthogonal drawing style.  Since the
aesthetics are of particular importance for the quality of such a
drawing, we put a major emphasis on extension questions in relation to
the number of bends.  It is worth noting that even the seminal work of
Tamassia~\cite{t-emn-87} already mentions the idea of preserving the
shape of a given subgraph by maintaining its orthogonal representation
via modifications in his flow network.  However, this approach only
preserves the shape of the subgraph as described by an orthogonal
representation, and not necessarily its drawing.
Fig.~\ref{fig:examples} shows that there are partial planar
orthogonal drawings that can be extended in a planar way, but not
orthogonally (Fig.~\ref{fig:examples}a) and that, even if an
orthogonal representation $O_G$ of $G$ preserves a given orthogonal
representation $O_H$ of a drawing~$\Gamma_H$ of $H$, there does not
necessarily exist a drawing~$\Gamma_G$ of $G$ realizing $O_G$ that
extends~$\Gamma_H$~(Fig.~\ref{fig:examples}b).

\paragraph{Contribution and Outline.}

After presenting preliminaries in Section~\ref{sec:preliminaries}, 
we give a linear-time algorithm for deciding the existence of an
orthogonal drawing extension in Section~\ref{sec:test-extens}.
Then, we consider the realizability problem, where we are given an
orthogonal extension in the form of a suitable planar embedding, and
we seek an orthogonal drawing extension that optimizes the number of bends.  Along
the lines of a result by Chan et al.~\cite{partial}, we show
that there always exists an orthogonal drawing extension such that
each edge has a number of bends that is linear in the complexity of
$\Gamma_H$ in Section~\ref{sec:realizability}.  We complement these
findings in Section~\ref{sec:bend-minim-extens} by showing that it is
NP-hard to minimize the number of bends and NP-complete to test
whether there exists an orthogonal drawing extension with a fixed number of bends
per edge.  For proofs of the results marked with a \textnormal {[$\boldsymbol {\ast} $]}, please refer to the Appendix.

\section{Preliminaries}
\label{sec:preliminaries}

We call the circular clockwise ordering of the edges around a vertex $v$ in an embedding the \emph{rotation} at $v$. Let $G=(V,E)$ be a simple undirected graph and let~$H \subseteq G$ be
a subgraph.  We refer to the vertices and edges of $H$ as
\emph{$H$-vertices} and \emph{$H$-edges}, respectively.  Similarly, we
refer to the vertices of $V(G) \setminus V(H)$ and to the edges of
$E(G) \setminus E(H)$ as \emph{$G$-vertices} and \emph{$G$-edges},
respectively.

Let $( G, H, \Gamma_H)$ be a triple composed of a graph $G$, a subgraph $H \subseteq G$, and an orthogonal drawing $\Gamma_H$ of $H$. We denote by \ortho (\textsc{RepExt} stands for representation extension) the problem of testing whether $G$ admits an orthogonal drawing $\Gamma_G$ that extends $\Gamma_H$.
In $\Gamma_H$, we say that an $H$-edge is \emph{attached to} one of the four \emph{ports} of its end vertices. If there is no $H$-edge attached to a port of a vertex, then this port is \emph{free}; note that the free ports are those at which the $G$-edges can be attached in $\Gamma_G$. 
For two edges $e$ and $e'$ that are consecutive in the rotation at a vertex $v$ in $\Gamma_H$, we denote by $\mathcal P_H(e,e')=k$ the fact that there exist exactly $k$ free ports of $v$ when moving from $e$ to $e'$ in clockwise order around their common endvertex. We call $\mathcal P_H(e,e')=k$ a \emph{port constraint}, and we denote by $\mathcal P_H$ the set of all port constraints in $\Gamma_H$. Note that, for a vertex $v$ with rotation $e_1,\dots,e_h$ in $\Gamma_H$, with $h \leq 4$, we have $\sum_{i=1}^{h} \mathcal P_H(e_i,e_{i+1}) = 4 - \deg(v)$ (defining $e_{h+1} := e_1$). 


We now show that 
to solve an instance $(G,H,\Gamma_H)$ of the \ortho problem, it suffices to only consider the port
constraints determined by $\Gamma_H$ together with the embedding $\mathcal E_H$ of $H$ in $\Gamma_H$. More specifically, we prove the following characterization, which could also be deduced from~\cite{accdd-sop-16}.

\begin{restatable}[$\star$]{theorem}{characterization}
	\label{thm:characterization}
	Let $(G, H, \Gamma_H)$ be an instance of \ortho. Let $\mathcal E_H$ be the embedding of $H$ in $\Gamma_H$, and let $\mathcal P_H$ be the port constraints induced by $\Gamma_H$. Then, $(G, H, \Gamma_H)$ admits an orthogonal drawing extension if and only if $G$ admits a planar embedding $\mathcal E_G$ that extends $\mathcal E_H$ and such that, for every port constraint $\mathcal P_H(e,e')=k$, there exist at most $k$ $G$-edges between $e$ and $e'$ in the rotation at $v$ in $\mathcal E_G$, where $v$ is the common vertex of the $H$-edges $e$ and $e'$.
\end{restatable}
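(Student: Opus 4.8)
The plan is to prove the two directions separately, with necessity being immediate and all the work concentrated in the sufficiency direction. For necessity, suppose $\Gamma_G$ extends $\Gamma_H$ and let $\mathcal E_G$ be the embedding it induces. Since the restriction of $\Gamma_G$ to $H$ is $\Gamma_H$, the rotation of each $H$-vertex in $\mathcal E_G$ restricted to the $H$-edges coincides with its rotation in $\mathcal E_H$, so $\mathcal E_G$ extends $\mathcal E_H$. Now fix a port constraint $\mathcal P_H(e,e')=k$ at a common endvertex $v$. In a planar orthogonal drawing every edge incident to $v$ leaves through one of the four ports, and distinct edges use distinct ports; as $e$ and $e'$ are consecutive $H$-edges in the rotation, the ports strictly between them clockwise carry no $H$-edge and are exactly the $k$ free ports counted by the constraint. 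Hence at most $k$ edges, in particular at most $k$ $G$-edges, can occur between $e$ and $e'$ in the rotation at $v$, as required.

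For sufficiency, let $\mathcal E_G$ be a planar embedding extending $\mathcal E_H$ and satisfying all port constraints. I would first convert this combinatorial data into a consistent \emph{port assignment}. For each $H$-vertex $v$ and each port constraint $\mathcal P_H(e,e')=k$, the rotation of $\mathcal E_G$ lists at most $k$ $G$-edges between $e$ and $e'$; I assign them injectively and in order to the $k$ free ports of $v$ lying between $e$ and $e'$. For a $G$-vertex I assign its at most four incident edges to the four ports following the rotation of $\mathcal E_G$. This is exactly where the satisfied port constraints are used, and the outcome fixes the direction in which every edge leaves every vertex, and hence fixes all vertex angles as multiples of $90^\circ$.

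It then remains to realize this data as an actual drawing extending $\Gamma_H$. The key observation is that, once $\mathcal E_G$ and the port assignment are fixed, the problem decomposes over the faces of $\Gamma_H$: the drawing of $H$ subdivides the plane into orthogonal polygonal regions, no $G$-edge may cross an $H$-edge, and so $\mathcal E_G$ assigns every $G$-vertex and every $G$-edge to a unique such region, attaching to prescribed free ports on its boundary. Moreover, the angular sector of a constraint $\mathcal P_H(e,e')$ lies inside a single face of $\Gamma_H$, so the per-sector assignment of $G$-edges to free ports places each edge into the correct region. I would then draw the $G$-part of each region independently: place all its $G$-vertices in a small area in the interior, respecting their cyclic order along the boundary, and route every $G$-edge as an orthogonal polyline to its prescribed boundary port, using bends freely. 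Gluing these per-region drawings along the fixed $\Gamma_H$ yields $\Gamma_G$.

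The main obstacle is precisely this realization step, and it is exactly the gap highlighted by Fig.~\ref{fig:examples}(b)--(c): a mere orthogonal representation extending the representation $O_H$ of $\Gamma_H$ need not be drawable onto the \emph{fixed} geometry of $\Gamma_H$, so I cannot stop at producing a consistent orthogonal representation via Tamassia's flow network. The care goes into proving the routing lemma for a single region, namely that the prescribed ports on the boundary of an orthogonal polygon can always be joined to interior $G$-vertices by pairwise disjoint orthogonal curves leaving each port in its assigned direction and realizing the topology dictated by $\mathcal E_G$. I expect to establish this by a grid-refinement argument analogous to the polyline routing of Chan et al.~\cite{partial}, exploiting the freedom to place arbitrarily many bends per edge; this is also what will ultimately yield the $O(|V(H)|)$ bound on bends per edge claimed in Section~\ref{sec:realizability}.
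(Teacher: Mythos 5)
Your proposal is correct and takes essentially the same approach as the paper: necessity is immediate, and sufficiency is obtained by using the satisfied port constraints to assign the $G$-edges injectively to free ports, routing the $G$-edges as crossing-free curves respecting $\mathcal E_G$, and then realizing each curve as an orthogonal polyline with arbitrarily many bends. Your per-face decomposition and grid-refinement routing lemma are simply a more explicit rendering of the paper's ``route topologically, then approximate orthogonally'' step, whose quantitative version both you and the paper defer to the realizability result (Theorem~\ref{orthoplanar}).
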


In view of Theorem~\ref{thm:characterization}, we define a new problem, called \port, which is linear-time equivalent to \ortho. An instance of this problem is a 4-tuple $( G, H, \mathcal{E}_H, \mathcal{P}_H )$ and the goal is to test whether $G$ admits an embedding $\mathcal E_G$ that satisfies the conditions of Theorem~\ref{thm:characterization}. In order to unify the terminology, we also refer to the \emph{Partially Embedded Planarity} problem studied in~\cite{adfjkpr-tppeg-15} as \topo (\textsc{top} stands for topological drawing). Recall that an instance of this problem is a triple $\langle G, H, \mathcal{E}_H \rangle$, and the goal is to test whether $G$ admits an embedding $\mathcal E_G$ that extends $\mathcal E_H$. As proved in~\cite{adfjkpr-tppeg-15}, \topo can be solved in linear time. 

\section{Testing Algorithm}
\label{sec:test-extens}

In this section we show that \ortho can be solved in linear time. By Theorem~\ref{thm:characterization}, it suffices to prove that \port can be solved in linear time. The algorithm is based on constructing in linear time, starting from an instance $(G,H,\mathcal E_H, \mathcal P_H)$ of \port, an instance $(G',H',\mathcal E_{H'})$ of \topo that admits a solution if and only if $(G,H,\mathcal E_H, \mathcal P)$ does.

In order to construct the instance $(G',H',\mathcal E_{H'})$ of \port, we initialize $G'=G$, $H'=H$, and $\mathcal E_{H'} = \mathcal E_H$. 
Then, for each vertex $v$ such that $ 1 < \deg_H(v) < \deg_G(v)$, we perform the following modifications; see Fig.~\ref{fig:replacement}.

Case 1: Suppose first that $\deg_H(v) = 3$ and $\deg_G(v) = 4$, and let $e=vw$ be the unique $G$-edge 
incident to $v$; refer to Fig.~\ref{fig:replacement}(a). Since $\deg_H(v) = 3$, there exist exactly two $H$-edges $e_1$ and $e_2$ such 
that $e_1$ immediately precedes $e_2$ in the rotation at $v$ in $\mathcal E_H$ and 
$\mathcal P(e_1,e_2)=1$.
Note that, to respect the port constraint, we have to guarantee that $e$ is placed between $e_1$ and $e_2$
in the rotation at $v$ in $\mathcal E_G$. 
For this, we subdivide $e$ with a new vertex $w'$, that is, we remove $e$ from $G'$, and we add the vertex $w'$ and the
edges $vw'$ and $w'w$ to $G'$. Also, we add $w'$ and $vw'$ to $H'$, and insert $vw'$ between $e_1$ and $e_2$
in the rotation at $v$ in $\mathcal E_{H'}$.

Case 2: Suppose now that $\deg_H(v) = 2$ and $\deg_G(v) \geq 3$. Let $e_1$ and $e_2$ be the two
$H$-edges incident to $v$, and let $e=vw$ and $e^*=vz$ be the at most two $G$-edges incident to $v$. 
We distinguish two cases, based on whether $\mathcal P(e_1,e_2)=2$ and $\mathcal P(e_2,e_1)=0$ (or vice versa), or $\mathcal P(e_1,e_2)=\mathcal P(e_2,e_1)=1$.

Case 2.a: If $\mathcal P(e_1,e_2)=2$, then we need to guarantee that both $e$ and $e^*$ (if it exists) are placed between $e_1$ and $e_2$
in the rotation at $v$ in $\mathcal E_G$; refer to Fig.~\ref{fig:replacement}(b). For this, we remove $e$ and $e^*$ from $G'$, and we add a new vertex $w'$ and the
edges $vw'$, $w'w$, and $w'z$ to $G'$. Also, we add $w'$ and $vw'$ to $H'$, and insert $vw'$ between $e_1$ and $e_2$
in the rotation at $v$ in $\mathcal E_{H'}$. Note that, if $e^*$ does not exist, this is the same procedure as in the previous case.
Case 2.b: If $\mathcal P(e_1,e_2)=\mathcal P(e_1,e_2)=1$, then we need to guarantee that $e$ and $e^*$ (if it exists) appear on different sides
of the path composed of the edges $e_1$ and $e_2$; refer to Fig.~\ref{fig:replacement}(c). Note that, if $e^*$ does not exist, then $e$ can be on any of the two sides of this path, and thus in this case we do not perform any modification. If $e^*$ exists, we subdivide $e$, $e^*$, $e_1$, and $e_2$ with a new vertex each, that is, we remove these edges from $G'$ ($e_1$ and $e_2$ also from $H'$), and we add four new vertices $w'$, $z'$, $w_1'$, and $w_2'$. Also, we add to $G'$ the edges $vw'$, $vz'$, $vw_1'$, and $vw_2'$, and the edges $w'w$, $z'z$, $w_1'w_1$, and $w_2'w_2$, where $w_1$ and $w_2$ are the endpoints of $e_1$ and $e_2$, respectively, different from $v$. Further, we add the edges $w'w_1'$, $w_1'z'$, $z'w_2'$, and $w_2'w'$ to $G'$. Finally, we add the edges $vw_1'$, $w_1'w_1$, $vw_2'$, and $w_2'w_2$ also to $H'$; in $\mathcal E_{H'}$, we place $w_1'w_1$ and $w_2'w_2$ in the rotations at $w_1$ and at $w_2$, respectively, in the same position as $e_1$ and $e_2$, respectively, in $\mathcal E_H$. The rotations at $v$, $w'$, $z'$, $w_1'$, and $w_2'$ in $\mathcal E_{H'}$ do not need to be set, since each of these vertices has at most two incident $H'$-edges. The above construction leads to the following lemma, whose full proof is in the Appendix.

\begin{figure}[tb]
	\centering
	\includegraphics{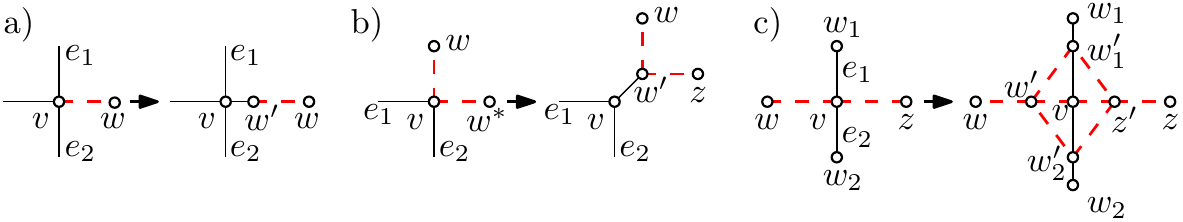}
	\caption{Gadgets for $H$-vertices}
	\label{fig:replacement}
\end{figure}

\begin{restatable}[$\star$]{lemma}{reduction}
\label{lem:reduction}
	The instance $(G',H',\mathcal E_{H'})$ has an embedding extension if
	and only if $(G,H,\mathcal E_H,\mathcal P)$ has an embedding extension 
	satisfying the port constraints.
\end{restatable}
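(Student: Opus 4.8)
The plan is to prove the two implications by exhibiting a reversible, purely local correspondence between embedding extensions: each gadget installed at a vertex $v$ with $1<\deg_H(v)<\deg_G(v)$ encodes exactly the port constraint at $v$, and distinct gadgets do not interfere. First I would record that the untreated vertices are harmless: if $\deg_H(v)\le 1$ then every placement of the incident $G$-edges respects the single, slack port constraint at $v$, and if $\deg_H(v)=\deg_G(v)$ there are no $G$-edges at $v$ at all. For a treated vertex, every gadget only subdivides edges incident to $v$ and reroutes the incident $G$-edges; crucially, whenever an $H$-edge $e_i=vw_i$ is subdivided, its new pendant part $w_i'w_i$ is inserted into $\mathcal E_{H'}$ at $w_i$ in the former position of $e_i$. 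Hence the rotation at $w_i$, and therefore any port constraint anchored at $w_i$, is left unchanged, so I may analyse the gadgets one vertex at a time.

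For the implication from $(G',H',\mathcal E_{H'})$ to $(G,H,\mathcal E_H,\mathcal P)$, I would take an embedding extension $\mathcal E_{G'}$, contract all subdivision vertices, and delete the auxiliary edges introduced by the gadgets, yielding an embedding $\mathcal E_G$ of $G$. Because the pendant $H'$-edges sit in their original ports, $\mathcal E_G$ extends $\mathcal E_H$; it remains to verify the port constraints from the now possibly forced rotation at each treated $v$. In Case~1 and Case~2.a the $H'$-edge $vw'$ was placed between $e_1$ and $e_2$, so after contraction the one or two reattached $G$-edges lie precisely in the gap from $e_1$ to $e_2$, matching $\mathcal P(e_1,e_2)\in\{1,2\}$. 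The reverse implication for these cases is symmetric: starting from a port-respecting $\mathcal E_G$, the relevant $G$-edges already occupy the correct gap, so I can subdivide them (Case~1) or route them through a common new vertex $w'$ (Case~2.a) and insert $vw'$ into the same gap, obtaining an extension of $\mathcal E_{H'}$.

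The crux is Case~2.b, which I would handle through the auxiliary $4$-cycle. After subdivision the treated vertex $v$ has exactly the four neighbours $w',w_1',z',w_2'$, and together with the rim $w'w_1'z'w_2'$ these form a wheel $W_4$; since $v$ carries no other edges, restricting any planar embedding of $G'$ to this $W_4$ forces the rotation at $v$ to be the cyclic rim order $w',w_1',z',w_2'$ up to reflection, regardless of the pendant edges hanging off the rim. Thus $w'$ and $z'$ are separated by $w_1'$ and $w_2'$, and after un-subdividing, $e=vw$ and $e^*=vz$ fall on opposite sides of the path $w_1\,v\,w_2$, which is exactly the constraint $\mathcal P(e_1,e_2)=\mathcal P(e_2,e_1)=1$ with two $G$-edges present. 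For the converse I start from a port-respecting $\mathcal E_G$, in which $e$ and $e^*$ already lie on opposite sides of $w_1\,v\,w_2$, so that the rotation $e_1,e,e_2,e^*$ at $v$ matches the forced rim order up to reflection; I can then insert the four subdivision vertices and draw each rim edge inside a single incident face. The step that needs the most care is precisely this last planarity check: one must confirm that with $w'$ and $z'$ separated, consecutive rim vertices indeed occupy consecutive ports at $v$, so that the rim can be closed without crossings and the result still extends $\mathcal E_{H'}$. Once this is settled, both implications follow.
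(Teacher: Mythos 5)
Your proposal is correct and takes essentially the same route as the paper's proof: a vertex-by-vertex local analysis, contraction of the subdivision vertices and deletion of auxiliary edges in one direction, re-insertion of the gadgets in the other, and, for Case~2.b, the observation that the wheel on $v,w',w_1',z',w_2'$ is triconnected and therefore forces the rotation at $v$ up to reflection. The planarity check you flag at the end does hold (each rim edge joins two vertices that are consecutive in the rotation at $v$, so it can be drawn in the corresponding incident face), and it is exactly the point the paper settles by embedding the triconnected wheel according to its unique planar embedding with all other vertices outside the rim cycle.
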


\begin{theorem}\label{th:port-linear}
	The \port problem can be solved in linear time.
\end{theorem}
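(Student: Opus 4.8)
The plan is to reduce \port to \topo and then invoke the known linear-time algorithm for the latter~\cite{adfjkpr-tppeg-15}. Given an instance $(G,H,\mathcal E_H,\mathcal P_H)$ of \port, I would first run exactly the gadget construction described above: I scan the vertices of $H$, and for each vertex $v$ with $1 < \deg_H(v) < \deg_G(v)$ I apply the matching gadget (Case~1 if $\deg_H(v)=3$, and Case~2.a or Case~2.b if $\deg_H(v)=2$, according to its port constraints), leaving every other vertex untouched. This produces the \topo instance $(G',H',\mathcal E_{H'})$.

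For correctness I would appeal directly to Lemma~\ref{lem:reduction}: the instance $(G',H',\mathcal E_{H'})$ has an embedding extension if and only if $(G,H,\mathcal E_H,\mathcal P_H)$ has an embedding extension satisfying the port constraints. Hence deciding the \topo instance decides the \port instance. To actually return a witnessing embedding $\mathcal E_G$ rather than just a yes/no answer, I would contract each gadget back onto its original vertex $v$, undoing the subdivisions and vertex replacements; Lemma~\ref{lem:reduction} guarantees that the rotation at $v$ recovered in this way respects every port constraint at $v$.

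It then remains to check the running time. Since we are in the orthogonal setting, every vertex has degree at most four, so it carries a constant number of ports and incident edges, and the applicable gadget can be identified and installed in constant time. Each gadget inserts only $O(1)$ new vertices and edges and is applied to at most one $H$-vertex, so $|V(G')|+|E(G')| = O(|V(G)|+|E(G)|)$ and both the forward construction and the back-translation run in linear time. Composing this linear-time reduction with the linear-time \topo algorithm~\cite{adfjkpr-tppeg-15} then gives the claimed linear-time bound for \port.

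The conceptual heart of the argument is Lemma~\ref{lem:reduction}, the correctness of the gadgets, which I take as given here. Once that is in hand, the only remaining obstacle is the routine verification that the case distinction is exhaustive: vertices with $\deg_H(v)\le 1$ or with $\deg_H(v)=\deg_G(v)$ need no gadget, since in the former case the at most three $G$-edges can always be accommodated among the at least three free ports, and in the latter case there are no $G$-edges at all to place. Thus Cases~1 and~2 cover every vertex whose port constraints could actually be violated, and the reduction is well defined and size-preserving up to a constant factor.
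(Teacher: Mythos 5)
Your proposal is correct and follows essentially the same route as the paper: apply the gadget construction to build the \topo instance $(G',H',\mathcal E_{H'})$, invoke Lemma~\ref{lem:reduction} for correctness, and combine the $O(1)$-per-vertex, linear-size reduction with the linear-time algorithm of~\cite{adfjkpr-tppeg-15}. Your added remarks (exhaustiveness of the case distinction for vertices with $\deg_H(v)\le 1$ or $\deg_H(v)=\deg_G(v)$, and the back-translation of the embedding) are sound refinements of the same argument.
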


\begin{proof}
	Given an instance $I=(G,H,\mathcal E_H,\mathcal P)$ of \port, we construct the instance $I'=(G',H',\mathcal E_{H'})$ of
	\topo that has linear size as described above. This takes $O(1)$ time per vertex, and hence total linear time.
	By Lemma~\ref{lem:reduction}, $I$ has a solution if and only if $I'$ has one. Since the existence of a solution of $I'$ can be tested in linear time~\cite{adfjkpr-tppeg-15}, the statement follows.
\end{proof}
\noindent As a consequence of Theorems~\ref{thm:characterization} and~\ref{th:port-linear}, we conclude the following.
\begin{theorem}\label{th:ortho-linear}
	The \ortho problem can be solved in linear time.
\end{theorem}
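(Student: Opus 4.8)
The plan is to reduce the claim to the two results already established in this section: the combinatorial characterization of Theorem~\ref{thm:characterization} and the linear-time solvability of \port from Theorem~\ref{th:port-linear}. Since those two statements already do essentially all of the work, the proof amounts to checking that the translation from an \ortho instance to the corresponding \port instance can be carried out within the same time bound, and then invoking the characterization to transfer the answer back.

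First I would take an arbitrary instance $(G, H, \Gamma_H)$ of \ortho and extract from the drawing $\Gamma_H$ the two pieces of combinatorial data that Theorem~\ref{thm:characterization} refers to, namely the embedding $\mathcal E_H$ of $H$ and the port constraints $\mathcal P_H$. Reading off $\mathcal E_H$ is immediate, as the rotation at each vertex is determined by the clockwise order of its incident $H$-edges in $\Gamma_H$. To obtain $\mathcal P_H$, I would walk around each vertex $v$ and, for every pair $e, e'$ of $H$-edges that are consecutive in its rotation, count the number of free ports encountered when moving from $e$ to $e'$ in clockwise order, and record $\mathcal P_H(e,e') = k$. Since every vertex has at most four ports, this costs $O(1)$ per vertex and hence linear time in total, and it produces a valid instance $(G, H, \mathcal E_H, \mathcal P_H)$ of \port.

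Next I would apply Theorem~\ref{thm:characterization}, which asserts that $(G, H, \Gamma_H)$ admits an orthogonal drawing extension if and only if $G$ admits a planar embedding extending $\mathcal E_H$ that respects the port constraints $\mathcal P_H$; this is precisely the question answered by the \port instance just constructed. Running the linear-time algorithm of Theorem~\ref{th:port-linear} on this instance therefore decides the original \ortho instance, and, as the extraction step above is also linear, the whole procedure runs in linear time.

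I do not anticipate a genuine obstacle here: the characterization has decoupled the geometric aspects of the problem (orthogonality and the realization of bends) from the purely combinatorial embedding question with port constraints, and Theorem~\ref{th:port-linear} has already dispatched the latter. The only point that needs to be verified with any care is that the geometry of $\Gamma_H$ can be faithfully summarized by $\mathcal E_H$ and $\mathcal P_H$ in linear time, which the bounded vertex degree of an orthogonal drawing makes routine.
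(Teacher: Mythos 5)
Your proof is correct and follows exactly the paper's route: the paper likewise obtains this theorem as an immediate consequence of Theorem~\ref{thm:characterization} and Theorem~\ref{th:port-linear}, with the (implicit) linear-time extraction of $\mathcal E_H$ and $\mathcal P_H$ from $\Gamma_H$ that you spell out. Your explicit verification of that extraction step, using the degree-at-most-four bound, is a reasonable addition but not a departure from the paper's argument.
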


\section{Realizability with Bounded Number of Bends}
\label{sec:realizability}

In this section we prove that, if there exists an orthogonal drawing extension for an instance $(G, H, \Gamma_H)$ of \ortho, then there also exists one in which the number of bends per edge is linear in the complexity of the drawing~$\Gamma_H$.  By subdividing $H$ at the bends of~$\Gamma_H$, we can assume that~$\Gamma_H$ is a bend-free drawing of~$H$.  To achieve the desired edge complexity, it then suffices to show that $O(|V(H)|)$ bends per edge suffice.
This result can be considered as the counterpart for the orthogonal setting of the one by Chan et al.~\cite{partial} for the polyline setting. In their work, in fact, they show that a positive instance $(G, H, \Gamma_H)$ of the \topo problem can always be realized with at most $O(|V(H)|)$ bends per edge when~$\Gamma_H$ is a planar straight-line drawing of~$H$. 

Our approach follows the algorithm given in~\cite{partial}, with a main technical difference which is due to the peculiar properties of orthogonal drawings.  Their algorithm first constructs a planar supergraph $G'$ of $G$ that is Hamiltonian using a method of Pach et al.~\cite[Lemma 5]{pach}.  The main step of the algorithm of Chan et al.~\cite{partial} involves the \emph{contraction} of some edges of $G^\prime$~\cite[Lemma 3]{partial}). This operation identifies the two end-vertices of the contracted edge and merges their adjacency lists. However, both the construction of the supergraph $G'$ and the contractions may produce vertices of degree greater than $4$, which implies that the resulting graph does not admit an orthogonal drawing any longer. As such, these operations are not suitable for the realization of orthogonal drawings. In order to overcome this problem, we consider instead the \emph{Kandinsky} model~\cite{kaufmann}, which extends the orthogonal drawing model to also allow for vertices of large degree. Once the drawing has been computed, we remove the previously added parts and by adding a small amount of additional bends on the $G$-edges, we arrive at a orthogonal drawing of the initial graph $G$. More specifically, we prove the following theorem:
\begin{restatable}[$\star$]{theorem}{drawing}
	Let $(G, H, \Gamma_H)$ be an instance of \ortho. Suppose that $G$ admits an orthogonal drawing $\Gamma_G$ that extends $\Gamma_H$, and let $\mathcal E_G$ be the embedding of $G$ in $\Gamma_G$. Then we can construct a planar Kandinsky drawing of $G$ in $O(n^2)$-time, where $n$ is the number of vertices of $G$, that realizes $\mathcal{E}_G$, extends $\mathcal{H}$, and has at most $262|V(H)|$ bends per edge. 
	\label{mainthm}
\end{restatable}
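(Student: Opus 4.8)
The plan is to follow the high-level strategy of Chan et al.~\cite{partial}, replacing their polyline primitives with the \emph{Kandinsky} model so that the intermediate high-degree vertices created along the way remain drawable. Concretely, I would proceed in three stages: first augment $G$ into a Hamiltonian planar supergraph respecting the fixed embedding $\mathcal{E}_G$ and the drawing $\Gamma_H$; second, use the Hamiltonicity to realize a Kandinsky drawing in which every $G$-edge is routed along a controlled path; and third, strip the auxiliary structure and charge the resulting detours to a bounded number of bends per edge.

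\begin{proof}[Proof sketch]
First I would fix the embedding $\mathcal{E}_G$ of $G$ (which exists by assumption) and, using the construction of Pach et al.~\cite[Lemma 5]{pach}, augment $G$ to a planar supergraph $G'$ that is Hamiltonian while keeping $\mathcal{E}_G$ as a subembedding; crucially, I would perform this augmentation and the subsequent edge contractions~\cite[Lemma 3]{partial} only combinatorially, \emph{not} under the degree-$4$ restriction of the orthogonal model. Since these steps may create vertices of degree larger than $4$, I would carry them out in the Kandinsky model, where a single vertex is drawn as a box that admits arbitrarily many incident edges on each of its four sides; this is exactly the point where the orthogonal-specific difficulty, flagged in the text, is circumvented. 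The Hamiltonian cycle of $G'$ then provides a linear order of the vertices that I would use to lay them out along a horizontal line, routing the edges as monotone staircases above or below this line according to which side of the cycle they lie on in the embedding.

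Next I would realize this layout as a planar Kandinsky drawing of $G'$ that extends $\Gamma_H$. The vertices of $H$ must be placed at their prescribed positions from $\Gamma_H$, and the $H$-edges must follow their prescribed shapes; since $\Gamma_H$ is assumed bend-free after subdividing at its bends, the $H$-part is rigid and the construction only has freedom in routing the $G$-edges and the auxiliary edges of $G'$. Here the staircase routing, combined with the book-embedding-style assignment of edges to the two half-planes determined by the Hamiltonian cycle, yields a drawing in which every edge crosses the vertical grid lines a number of times bounded linearly in the number of vertices it must ``step over'', i.e. $O(|V(G')|)=O(|V(H)|)$ when restricted to the relevant auxiliary complexity. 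I would track these crossings carefully, as each one contributes a constant number of bends, and this is where the explicit constant in the $262|V(H)|$ bound comes from.

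Finally I would remove the edges and vertices added during the augmentation and contraction phases, returning to $G$ while keeping the positions of $V(G)$ and the routes of the surviving edges. The deletions only simplify edges, so the bend count per surviving $G$-edge does not increase beyond the bound already established; the only remaining task is to convert the Kandinsky boxes back into genuine orthogonal vertices of degree at most $4$, which is possible precisely because the original $G$ has maximum degree $4$ and the high degrees were artifacts of the auxiliary construction. Reattaching each $G$-edge to a free port of its endpoint, as guaranteed by Theorem~\ref{thm:characterization}, may force a constant number of additional bends near the vertex, which I would absorb into the constant. The running time is dominated by the $O(n^2)$ layout of the staircase drawing, giving the claimed bound. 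The hard part, and the place I would spend the most care, is the bookkeeping that simultaneously (i) keeps the $H$-part exactly equal to $\Gamma_H$ throughout all combinatorial modifications and (ii) pins down an explicit constant per edge; the conceptual novelty over~\cite{partial} is confined to performing the degree-increasing operations in the Kandinsky model and then safely descending back to the orthogonal model.
\end{proof}
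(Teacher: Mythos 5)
Your proposal correctly identifies the role of the Kandinsky model (absorbing the high-degree vertices created by the Hamiltonian augmentation and the contractions), but the central routing mechanism you propose does not work in this setting. You suggest using the Hamiltonian cycle of $G'$ to lay the vertices out along a horizontal line and route edges as monotone staircases above or below it, i.e.\ a two-page book-embedding layout. This is exactly what the fixed partial drawing forbids: the $H$-vertices have prescribed positions in the plane given by $\Gamma_H$ (and, after subdividing at bends, the $H$-edges are rigid segments), so there is no freedom to place them on a line. Your second stage acknowledges this rigidity and then silently assumes both things at once --- fixed positions and a linear layout --- which is contradictory; the staircase argument collapses as soon as the cycle's vertices must sit at arbitrary prescribed points scattered among the fixed obstacles formed by the drawing of $H$, and no bend bound or planarity guarantee survives.

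The missing idea, which is the heart of both the paper's proof and of Chan et al.~\cite{partial}, is a mechanism for routing edges between \emph{arbitrarily placed} fixed points. The paper works separately inside each face $F$ of $\Gamma_H$, builds an inner $\varepsilon$-approximating orthogonal polygon $F'$ of $F$ (Lemma~\ref{nest-approx}), partitions $F'$ into rectangles~\cite{rect}, and takes a spanning tree $T$ of the rectangle-adjacency graph, augmented with a leaf near each facial walk. After contracting each facial walk to a single vertex, applying Pach--Wenger (Lemma~\ref{makeham}), and contracting away the vertices without fixed locations, the Hamiltonian cycle and the remaining edges are drawn not along a line but along \emph{nested $\varepsilon$-approximations of the drawing of $T$} (Lemmas~\ref{hamcyc}, \ref{hamgraph} and~\ref{lemma3}); the tree acts as a highway system connecting the fixed locations, and every edge is routed as an offset copy of parts of $T$, which is what makes the bends per edge chargeable to $|E(T)|$ and hence to $|V(H)|$. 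The contractions are then undone by local Kandinsky drawings~\cite{kaufmann} around each fixed vertex, and the explicit accounting ($18k|V(T)|$ bends from Lemma~\ref{lemma3} with $k=4$ bends per tree edge, plus the buffer-zone detours of roughly $6|W_i|+30$ bends around each walk $W_i$) is what yields the constant $262$. Your sketch offers no substitute for this tree-approximation machinery, and the constant in your bound is asserted rather than derived.
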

An overview of the algorithm to construct the desired Kandinsky orthogonal drawing $\Gamma^*_G$ of $G$, whose main steps follow the method in \cite{partial}, is given below.


\begin{enumerate}[Step 1:]
\item Consider a face $F$ of $\Gamma_{H}$ with facial walks $W_1,W_2,\ldots, W_k$. Construct an $\varepsilon$-approximation of $F$ and let $W_i'$ be the orthogonal polygon that approximates $W_i$, $1 \leq i \leq k$. Let $F'$ be the face bounded by the approximated boundary components of $F$; refer to Lemma~\ref{nest-approx}, and to Fig.~\ref{fig:epsilon} in the Appendix.

\item Partition $F^\prime$ into rectangles~\cite{rect} and construct a graph $K$ by placing a vertex at the center of each rectangle and by joining the vertices of adjacent rectangles.  Let $T$ be a spanning tree of $K$. For each facial walk $W_i$, add a new vertex near to $W_i$ as a leaf of $T$ (see Fig.~\ref{fig:rect}). 
\begin{figure}[tb]
\begin{center}
\includegraphics[page=4]{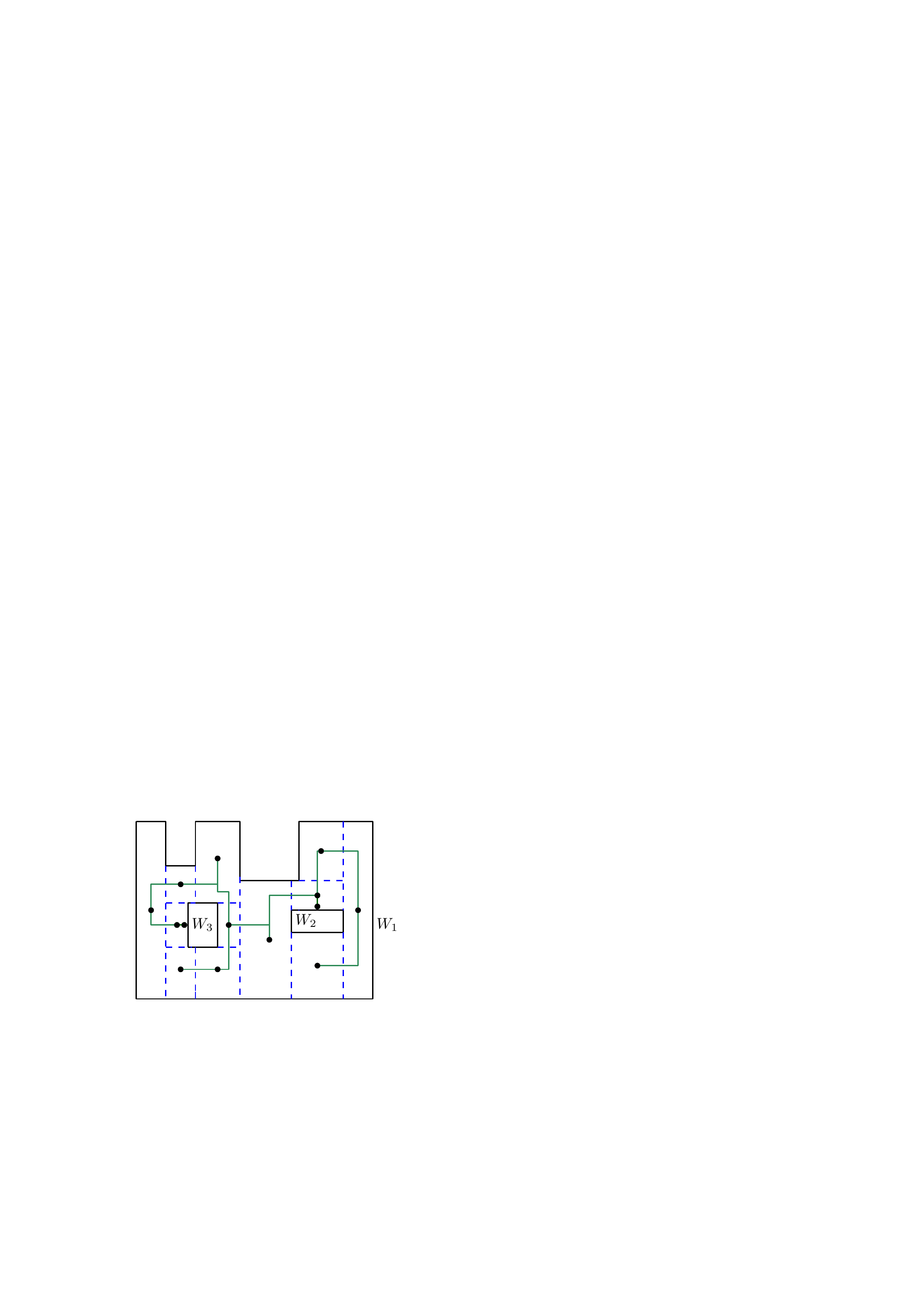}
\caption{(a) A face with outer walk $W_1$ and, inner facial walks $W_2$ and $W_3$. (b) An approximation $F^\prime$ of $F$. (c) A face and  a corresponding tree $T$ }
\label{fig:rect}
\end{center}
\end{figure}\item Construct the multigraph $G_F$ induced by the vertices lying inside or on the boundary of $F$ and by contracting each facial walk of $F$ to a single vertex. Then draw $G_F$ along $T$. Now, reconstruct the edges of $G\setminus H$ and the edges between $G_F$ and other components of $G$ inside $F$. Refer to Lemma~\ref{lemma3}  in the Appendix and to Fig.~\ref{lemma2figsim}.

\end{enumerate}
\begin{figure}[tb]
\begin{center}
\includegraphics{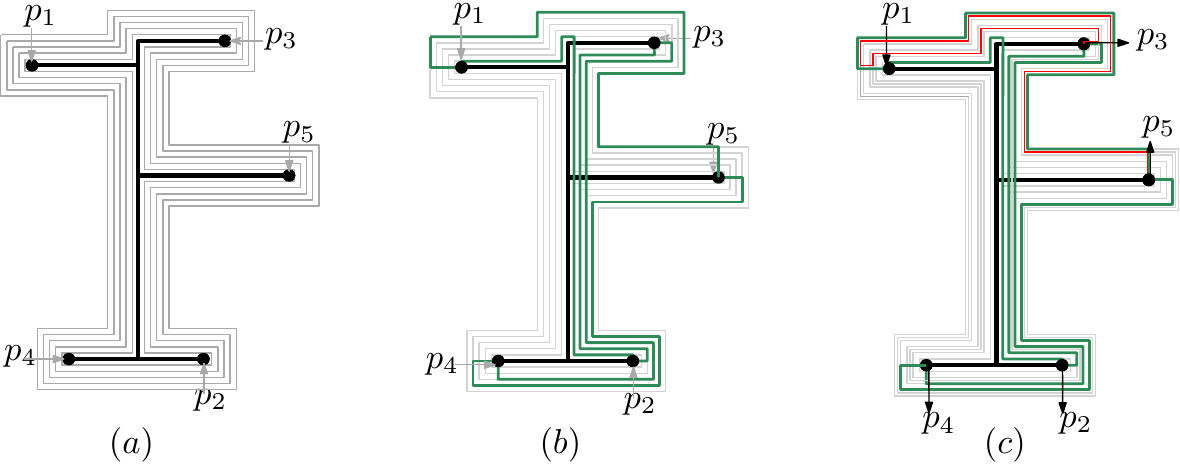}
\caption{(a) An orthogonal drawing of a tree $T$ together with approximations along $T$ (b) An orthogonal drawing of the Hamiltonian cycle $C$ with respect to $T$ (c) The edge $p_3p_5$ is drawn using approximations of $T$}
\label{lemma2figsim}
\end{center}
\end{figure}
\noindent We then transform $\Gamma^*_G$ into an orthogonal drawing $\Gamma_G$ of $G$ with $O(|V(H)|)$ bends per edge that extends~$\Gamma_H$. An illustration is given in Fig. \ref{fig:orthplnr}.

\begin{figure}[tb]
\begin{center}
\includegraphics{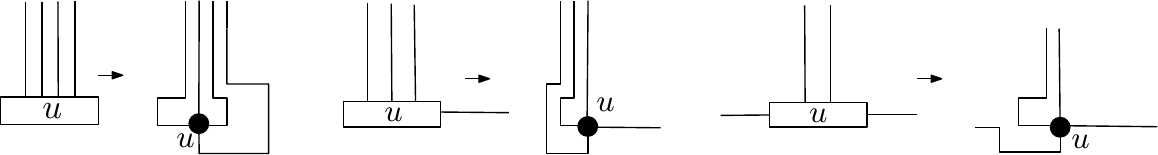}
\caption{Re-routing the edges incident to a vertex $u$ in the Kandinsky drawing $\Gamma^K_G$ to obtain the orthogonal drawing $\Gamma_G$.}
\label{fig:orthplnr}
\end{center}
\end{figure}
\begin{restatable}[$\star$]{theorem}{orthodrawing}
Let $(G, H, \Gamma_H)$ be an instance of \ortho. Suppose that $G$ admits an orthogonal drawing $\Gamma_G$ that extends $\Gamma_H$, and let $\mathcal E_G$ be the embedding of $G$ in $\Gamma_G$. Then we can construct a planar orthogonal drawing of $G$ in $O(n^{2})$-time, where $n$ is the number of vertices of $G$, that realizes $\mathcal{E}_G$, extends $\mathcal{H}$, and has at most $270|V(H)|$ bends per edge. 
\label{orthoplanar}
\end{restatable}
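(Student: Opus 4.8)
The plan is to start from the Kandinsky drawing guaranteed by Theorem~\ref{mainthm} and to turn it into a genuine orthogonal drawing by a purely local rerouting around each vertex. Let $\Gamma^K_G$ be the planar Kandinsky drawing produced by Theorem~\ref{mainthm}; it realizes $\mathcal E_G$, extends $\Gamma_H$, and has at most $262|V(H)|$ bends per edge. Since $G$ admits an orthogonal drawing, every vertex has degree at most $4$, so the four ports of an orthogonal vertex suffice to accommodate all incident edges. The only feature that prevents $\Gamma^K_G$ from being orthogonal is that the Kandinsky model allows several edges incident to the same vertex $u$ to leave from the same side of the box representing $u$, whereas in an orthogonal drawing each of the at most four incident edges must occupy a distinct port. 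My goal is therefore to redistribute, vertex by vertex, the incident edges onto distinct ports while (i) preserving the rotation system $\mathcal E_G$, (ii) keeping the drawing of $H$ unchanged, and (iii) adding only a constant number of bends per edge.

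For each vertex $u$ I would fix a small open disk $D_u$ so that the disks of distinct vertices are pairwise disjoint and $D_u$ meets $\Gamma^K_G$ only in the box of $u$ and in a connected initial portion (a \emph{stub}) of each incident edge; I am then free to redraw the interior of $D_u$, leaving the number of bends of every edge outside $D_u$ unchanged. Inside $D_u$ I contract the box of $u$ to the single point representing $u$ and reconnect each stub to a port. The stubs meet $\partial D_u$ in the clockwise cyclic order prescribed by $\mathcal E_G$, and I assign them to ports in this same clockwise order: at a $G$-vertex the at most four edges are placed on consecutive ports, while at an $H$-vertex the $H$-edges keep the ports they occupy in $\Gamma_H$ and each incident $G$-edge is routed to a \emph{free} port lying in the correct gap between two consecutive $H$-edges. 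Feasibility of this last step is exactly the port-constraint condition of Theorem~\ref{thm:characterization}: since $\mathcal E_G$ satisfies every constraint $\mathcal P_H(e,e')=k$, the number of $G$-edges falling into a gap never exceeds the number $k$ of free ports available there, so distinct free ports can always be chosen. Because the cyclic order of the stubs on $\partial D_u$ coincides with the cyclic order of their target ports, the reconnection is a planar matching in the annular region between the ports of $u$ and $\partial D_u$, and each stub can be joined to its port by an orthogonal curve with a constant number of bends without crossing the others. Thus each edge receives at most a constant number of additional bends at each of its two endpoints, no new crossings arise, the rotation $\mathcal E_G$ is preserved, and the drawing of $H$ is untouched outside the disks and respected inside them.

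Accounting for the bends, each edge carries at most $262|V(H)|$ bends from $\Gamma^K_G$ plus the constantly many bends introduced at its two endpoints; since $|V(H)|\ge 1$, this is at most $270|V(H)|$, as claimed. For the running time, Theorem~\ref{mainthm} produces $\Gamma^K_G$ in $O(n^2)$ time and the rerouting costs $O(1)$ per vertex beyond writing the output, so the total remains $O(n^2)$. The main obstacle I anticipate is the careful verification of the local reconnection: I must argue that assigning the stubs to ports in their common clockwise order always yields a crossing-free orthogonal routing inside $D_u$ with a bounded number of bends, and, at $H$-vertices, that the free ports reachable within each gap are precisely those made available by the port constraints, which is where Theorem~\ref{thm:characterization} is essential. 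Showing that the constant incurred per endpoint is at most four, so that the global bound $270|V(H)|$ holds, is then a routine case analysis over the possible distributions of edges on the sides of a Kandinsky vertex.
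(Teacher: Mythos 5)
Your proposal is correct and follows essentially the same route as the paper: both take the Kandinsky drawing from Theorem~\ref{mainthm} and perform a purely local rerouting around each vertex, moving $G$-edges onto distinct free ports in an order consistent with $\mathcal{E}_G$, with feasibility guaranteed by the port constraints and at most four extra bends per edge per endpoint, giving $262|V(H)|+8\le 270|V(H)|$ bends. Your write-up is in fact somewhat more explicit than the paper's (which delegates the local rerouting to a figure), but the underlying argument is the same.
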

 

\section{Bend-Optimal Extension}
\label{sec:bend-minim-extens}


In this section we study the problem of computing an orthogonal drawing 
extension of an instance $I=(G,H,\Gamma_H)$ of \ortho 
with the minimum number of bends. Observe that, if $H$ is empty, this is
equivalent to computing a bend-minimal drawing of $G$, which is
NP-complete if the embedding of $G$ is not fixed. We
thus assume that $G$ comes with a fixed planar embedding $\mathcal E_G$
that satisfies the port constraints of $\Gamma_H$, and we study the
complexity of computing a bend-optimal drawing~$\Gamma_G$ of $G$ with 
embedding $\mathcal E_G$ that extends~$\Gamma_H$.

Here, we specifically focus on the restricted case where $V(H) = V(G)$
and $E(H) = \emptyset$, which we call \emph{orthogonal point set
  embedding with fixed mapping}.  We show that, even in this case, it
is NP-hard to minimize the number of bends on the edges.  On the
positive side, we show that in this case the existence of a drawing
that uses one bend per edge can be tested in polynomial time.

\begin{theorem}
  \label{thm:bend-optimal-hardness}
  Given an instance $(G,H,\Gamma_H)$ of \ortho, a planar embedding~$\mathcal E_G$ of $G$ that
  satisfies the port constraints of~$\Gamma_H$, and a number
  $k\in \mathbb N_0$, it is NP-complete to decide whether $G$ admits
  an orthogonal drawing~$\Gamma_G$ with embedding~$\mathcal E_G$ that
  extends~$H$ and has at most $k$ bends.  This holds even if
  $V(G) \setminus V(H) = \emptyset$, $E(H) = \emptyset$, and $E(G)$ is
  a matching.
\end{theorem}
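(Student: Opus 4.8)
The statement asserts NP-completeness, so the plan is to establish membership in NP and then prove NP-hardness by reduction from a suitable satisfiability problem. For membership, I would exhibit a polynomial-size certificate. By Theorem~\ref{orthoplanar}, whenever an orthogonal extension realizing the prescribed embedding $\mathcal E_G$ exists at all, there is one with at most $O(|V(H)|)$ bends per edge; hence we may assume $k \le |E(G)|\cdot c\,|V(H)|$ for the constant $c$ of that theorem (if $k$ is larger, the decision collapses to the mere existence of an extension, which is decidable by Theorem~\ref{th:ortho-linear}). Thus $k$ is polynomially bounded, a guessed orthogonal representation realizing $\mathcal E_G$ with at most $k$ bends has polynomial size, and one verifies in polynomial time that it extends $\Gamma_H$, respects $\mathcal E_G$, and is realizable via the standard bend/angle consistency conditions.

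For hardness I would reduce from a planar variant of 3-SAT, which matches the inherently geometric and planar nature of the instances. The key enabling observation is that, since $E(H)=\emptyset$, there are no nontrivial port constraints, so the reduction is free to prescribe both the point positions (through $\Gamma_H$) and a planar embedding $\mathcal E_G$ of the matching $E(G)$. The guiding principle is that every edge joining two non-aligned fixed points needs at least one bend; I therefore set the budget $k$ to the sum over all edges of these forced minima and design the gadgets so that this global minimum is simultaneously attainable by all edges exactly when the formula is satisfiable. Concretely, a variable gadget is a small cluster of fixed points joined by matching edges admitting precisely two conflict-free $1$-bend routings, encoding the two truth values; wire gadgets are chains whose only $1$-bend routings propagate the chosen value; and a clause gadget is a configuration in which all incident edges attain their $1$-bend minimum if and only if at least one literal wire arrives in the satisfying state, while otherwise planarity with respect to $\mathcal E_G$ forces some edge to detour with an extra bend.

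The correspondence is then as follows: a satisfying assignment induces consistent $1$-bend routings throughout, meeting the budget $k$; conversely any drawing with at most $k$ bends must route every edge at its forced minimum, which compels a globally consistent truth assignment that satisfies every clause. This yields NP-hardness within the restricted regime $V(G)\setminus V(H)=\emptyset$, $E(H)=\emptyset$, and $E(G)$ a matching.

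The main obstacle is the clause gadget together with exact bend accounting: I must guarantee that an unsatisfied clause \emph{provably} forces at least one additional bend, so that the budget is exceeded, while a satisfied clause admits a routing with \emph{no slack}, and that these local guarantees compose without unintended interactions between neighbouring gadgets. Secondary difficulties are realizing every gadget with polynomially bounded integer coordinates and certifying that the prescribed embedding $\mathcal E_G$ is consistent with the point placement and itself realizable, so that the output is a bona fide instance of the stated restricted problem.
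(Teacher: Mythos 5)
Your hardness plan follows the same route as the paper's actual proof---a reduction from a planar 3-SAT variant in which every $G$-edge joins two non-aligned predrawn points, the budget $k$ equals the sum of the forced per-edge minima, variable values are encoded by a binary choice of one-bend routing, and chains of such edges act as wires---but it stops exactly where the proof's content begins. The paper's ``boxes'' (drawn \emph{top} or \emph{bottom}), diagonal variable chains, and anti-diagonal pipe chains are what you sketch; the crux, however, is the clause gadget, which you only specify by its desired behaviour and flag as the main obstacle. Note that your specification (``all incident edges attain their $1$-bend minimum iff some literal arrives in the satisfying state'') is not what the paper builds, and it is not clear it is achievable at all: the paper instead uses a \emph{clause edge} whose forced minimum is \emph{two} bends (any one-bend routing cuts through the blocker box $B$ or the literal box $L_2$), supported by literal, transmission, and blocker boxes, and proves by case analysis (Claim~\ref{cl:clause}) that two bends suffice iff not all literal boxes are drawn top; the budget then becomes $k=\beta+\gamma$. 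Similarly, the ``no unintended interactions'' problem you defer is solved in the paper by a concrete geometric device: stretching the monotone planar 3-SAT layout vertically, applying the affine map $\Phi\colon(x,y)\mapsto(x-y,x+y)$, and confining each clause gadget to the image of a triangle $\Delta_C$ disjoint from the rest of the construction. Without these constructions and their correctness arguments (Claim~\ref{cl:equivalence}), the reduction is a statement of intent rather than a proof.

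Your NP-membership argument also has a genuine flaw. Bounding $k$ via Theorem~\ref{orthoplanar} is fine, but the certificate you propose---an orthogonal representation verified by ``standard bend/angle consistency conditions''---does not certify what is needed: those conditions only guarantee that the representation is realizable as \emph{some} orthogonal drawing, not as one extending the \emph{fixed} drawing $\Gamma_H$. Figure~\ref{fig:examples}(c) of the paper exhibits precisely this failure (an orthogonal representation extending the description of $\Gamma_H$ that admits no drawing extension), and the conclusion of the paper lists efficient testing of this realizability as an open problem, so it certainly cannot be dismissed as a routine check. A valid certificate has to be geometric (e.g., the coordinates of all bend points of a drawing with at most $k$ bends), together with an argument that such coordinates can be taken of polynomial size.
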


\begin{proof}
  We give a reduction from the NP-complete problem \emph{monotone planar
  3-SAT}~\cite{dk-obsps-12}.  In this variant of 3-SAT, the variable--clause graph
  is planar and has a layout where the variables are
  represented by horizontal segments on the $x$-axis, the clauses
  by horizontal segments above and below the $x$-axis, and
  each variable is connected to each clause containing it by a
  vertical segment, the clauses above the $x$-axis
contain only positive literals and the clauses below contain only negative literals; see Fig.~\ref{fig:formula}a.

  \begin{figure}[t]
    \centering
    \includegraphics{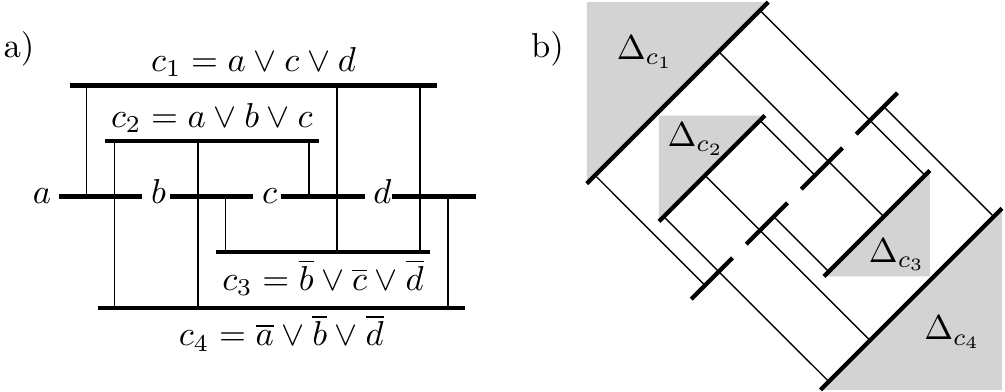}
    \caption{A representation of an instance of monotone planar 3-SAT
      with four variables $a,b,c,d$ and four clauses $c_1,c_2,c_3,c_4$
      (a). Image of the vertically-stretched version of (a) under the
      mapping~$\Phi$ (b).}
\label{fig:formula}
  \end{figure}

  A \emph{box} is an axis-aligned rectangle whose bottom-left
  and the top-right corners contain two $H$-vertices, connected 
  by a $G$-edge. We consider non-degenerate boxes, and thus this 
  $G$-edge requires at least one bend; when this edge is drawn with 
  one bend, there is a choice whether it contains the top-left or the 
  bottom-right corner of the box. In these cases we say that the box is 
  \emph{drawn top} and \emph{drawn bottom}, respectively. We now describe our
  \emph{variable},  \emph{pipe}, and  \emph{clause gadgets}.

  A \emph{variable gadget} consists of $h>0$ boxes $R_1,\dots,R_h$ that are
  $3 \times 3$-squares, where the bottom-left corner of $R_i$ lies at 
  $b + (2(i-1),2(i-1))$, for an arbitrary base
  point $b$; see Fig.~\ref{fig:variable}a-b.  The crucial property is that
  in a one-bend drawing of the gadget, $R_i$ is drawn bottom if
  and only if~$R_{i+1}$ is drawn top for $i=1,\dots,h-1$. Thus, in
  such a drawing, either all the \emph{odd boxes} (those with odd indices)
  are drawn top and all the \emph{even boxes} (those with even indices)
  are drawn bottom, or vice versa. This will be used to encode the 
  truth value of a variable.
  
  \begin{figure}[t]
    \centering
    \includegraphics[page=2]{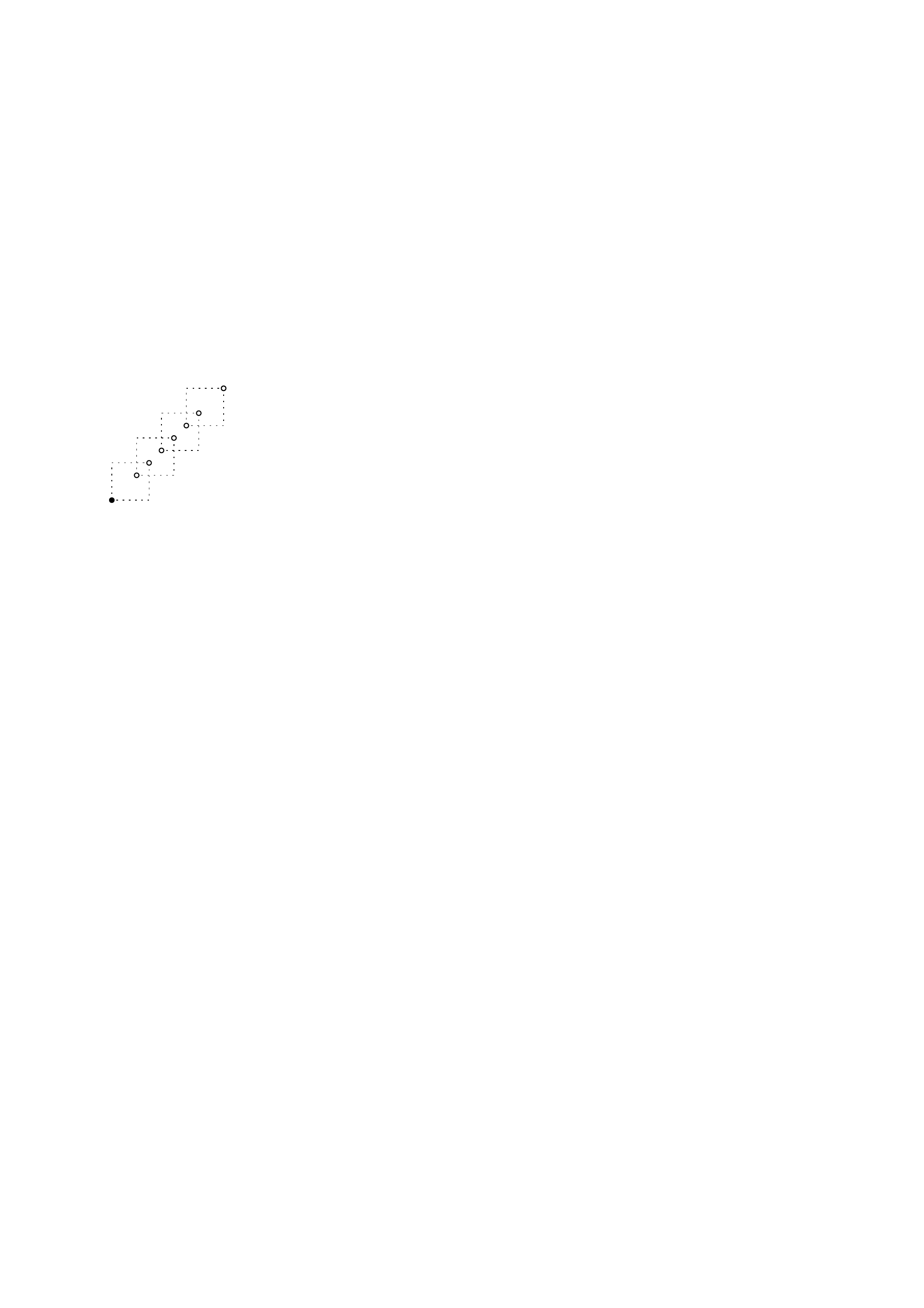} \hfil \includegraphics[page=3]{fig/hardness_gadget} \hfil \hfil \includegraphics[page=5]{fig/hardness_gadget} \hfil \includegraphics[page=6]{fig/hardness_gadget}
    \caption{Variable gadget with $h=4$ boxes (a,b).  In (a) the even
      boxes are drawn top and the odd boxes are drawn bottom, (b)
      shows the opposite.  Pipe gadget (c,d).  In (c) all boxes are drawn bottom, in (d) they are all drawn top.  In all cases the base point is marked.}
    \label{fig:variable}
  \end{figure}

  A \emph{(positive) pipe gadget} works similarly; see
  Fig.~\ref{fig:variable}c-d. For a base point $b$,
  it consists of $h>0$ boxes $R_1,\dots,R_h$ that are
  $3\times3$-squares such that the bottom-left corner of $R_i$ 
  lies at $b+(-2(i-1),2(i-1))$; see Fig.~\ref{fig:variable}c-d.
  The decisive property is that in a one-bend drawing of the gadget, 
  all the boxes are drawn the same as $R_1$, that is, either all
  bottom (see Fig.~\ref{fig:variable}c) or all top (see
  Fig.~\ref{fig:variable}d). Negative pipe gadgets are symmetric with respect to
  the line $y=x$ and behave symmetrically.  
  
  The last gadget we describe is the \emph{(positive) clause gadget}; 
  negative clause gadgets are symmetric with respect to the line $y=x$
  and behave symmetrically.  
  The positive clause gadget has three
  \emph{input boxes} $R_1,R_2,R_3$, whose 
  corners lie on a single line with slope~1; we assume that $R_1$ lies left of
  $R_2$, which in turn lies left of $R_3$. To simplify the 
  description, we assume that the left lower corners of these
  rectangles lie at $(x,x),(y,y)$, and $(z,z)$, respectively.  Refer to Fig.~\ref{fig:clause}a.

  We create three \emph{literal} boxes $L_1,L_2,L_3$ that are
  $3\times3$-squares.  The lower left corner of $L_1$ is $(x-3,y+2)$,
  the lower left corner of $L_2$ if $(y-2,y+2)$, and the lower left
  corner of $L_3$ is $(y,z+3)$.  Note that the interiors of $L_2$
  and~$R_2$ intersect in a unit square, and therefore, if~$R_2$ is
  drawn top, then $L_2$ must be drawn top.  To obtain the same
  behavior for the other input and literal rectangles, we add two
  \emph{transmission boxes} $T_1$ and~$T_2$.  The lower left corner of
  $T_1$ is $(x-1,x+2)$ and its upper right corner is $(x+1,y+4)$.  The
  bottom-left and top-right corner of $T_2$ are $(y+2,z+2)$
  and~$(z+1,z+4)$, respectively. This guarantees that, also for $i=1,3$, 
  if $R_i$ is drawn top, then $T_i$ and $L_i$ are drawn top. We
  finally have a \emph{blocker} box $B$, with corners at~$(x-1,z+1)$ 
  and~$(x+1,z+4)$; and a \emph{clause box}, whose corners are in the centers 
  of $L_1$ and~$L_3$, respectively.
  
  Note that the $G$-edge connecting the two corners of the clause box, which 
  we call the \emph{clause edge}, requires at least two
  bends, as any one-bend drawing cuts horizontally through either the
  blocker $B$ or the literal square $L_2$; see Fig.~\ref{fig:clause}a.
  The following claim shows that the possibility of drawing it with 
  exactly two bends depends on the drawings of the literal boxes of
  the clause gadget, and thus on the truth values of the literals; see the
  Appendix and Fig.~\ref{fig:clause}b-c.

\spnewtheorem{myclaim}{Claim}{\bfseries}{\itshape}

\begin{restatable}[$\star$]{myclaim}{clause}
	\label{cl:clause}
If the other edges are drawn with one
bend, then the clause edge can be drawn with two bends if and only
if not all literal boxes are drawn top.
\end{restatable}

  \begin{figure}[t]
    \centering
    \includegraphics[page=10]{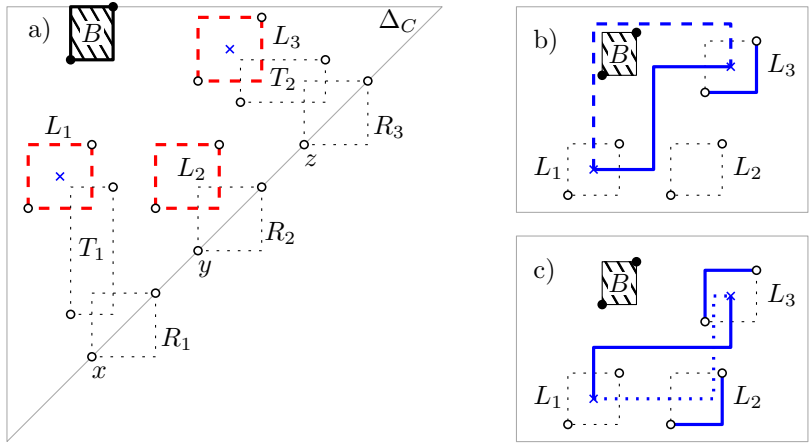}
    \caption{Clause gadget with input rectangles $R_1,R_2,R_3$.  The
      bottom-left and top-right corner of the clause box are drawn as
      crosses (a).  The image of the triangle~$\Delta_C$ under the
      mapping $(x,y) \mapsto (x-y,x+y)$ is drawn gray.  The
      possibilities of routing the clause edge with two bends, if
      $L_3$ is drawn bottom (b) and if $L_3$ is drawn top and $L_2$ is
      drawn bottom (c).}
    \label{fig:clause}
  \end{figure}

  We are now ready to put the construction together. Consider the layout
  of the variable--clause graph, where each
  variable $x$ is represented by a horizontal segment $s_x$ on the
  $x$-axis, and each clause $C=(c_1,c_2,c_3)$ with only positive (only negative) 
  literals by a horizontal segment $s_C$ above (below) the $x$-axis.  
  Further, the occurrence of a variable $x$ in a clause $C$ is represented by a vertical
  visibility segment $s_{x,C}$ that starts at an inner point of $s_x$
  and ends at an inner point of $s_C$; see Fig.~\ref{fig:formula}a.
  We call these points \emph{attachment points}.  By suitably
  stretching the drawing horizontally, we may assume that all segments
  start and end at points with integer coordinates divisible
  by~8.  We also stretch the whole construction vertically by a factor
  of $n$, which guarantees that for each clause segment
  $s_C$ the right-angled triangle $\Delta_C$, whose long side is $s_C$
  and that lies above $s_C$ (below $s_C$ if $C$ consists of negative
  literals) does not intersect any other segments in its interior.
  Note that the initial drawing fits on a grid of polynomial
  size~\cite{kr-pcr-92}, and the transformations only increase the area
  polynomially. For the construction it is useful to consider this
  representation rotated by $45^\circ$ in counterclockwise direction
  and scaled by a factor of $\sqrt{2}$ back to the grid.  This is
  achieved by the affine mapping~$\Phi \colon (x,y) \mapsto (x-y,x+y)$; see
  Fig.~\ref{fig:formula}b.

  For each variable segment $s_x$ with left endpoint $(a,0)$ and right
  endpoint $(b,0)$ we create a variable gadget with $h=(b-a)/2$ boxes
  and base point $(a,a)$.  For each clause segment $s_C$ above the
  $x$-axis with attachment points $(a_1,b),(a_2,b), (a_3,b)$, we create
  a positive clause gadget with input boxes at $(a_i-b,a_i+b)$.
  For each vertical segment $s_{x,C}$ above
  the $x$-axis with attachment points $(a,0)$ and $(a,b)$, we create a
  positive pipe gadget of $h=(b/2)-2$ boxes at base point $(a-2,a-2)$.
  Note that, together with the box of the variable gadget of $x$
  at $(a,a)$ and the input box of $C$ at $(a-b,a+b)$, the newly placed
  boxes form a pipe gadget that consists of $h+2$ boxes.  
  Since distinct vertical segments on the same side of
  the $x$-axis have horizontal distance at least~8, the boxes of
  distinct pipes do not intersect, and the placement is such that only
  the first and last box of each pipe gadget intersect boxes that
  belong to the corresponding variable or clause gadget.  Finally note
  that for each clause $C$, except for the input boxes, the clause
  gadget lies inside the image of the triangle~$\Delta_C$ under the
  mapping $\Phi$, since the attachment points are interior points of
  $s_C$, and the $x$-coordinates of its endpoints are divisible by~8.
  Hence, the only interaction of the clause gadget with the remainder
  of the construction is via the input variables
  The proof of the following claim, in the 
  Appendix, is based on showing that we can draw each box with exactly one
  bend and each clause edge with exactly two bends, if and only if the
  original instance of monotone planar 3-SAT is satisfiable.
%

\begin{restatable}[$\star$]{myclaim}{equivalence}
	\label{cl:equivalence}
Let $\varphi$ be an instance of monotone planar
3-SAT, with $\gamma$ clauses. Also, let $\beta$ be the number of boxes 
in the instance $(G,H,\Gamma_H)$ of \ortho constructed as described above.
Then, the formula~$\varphi$ is satisfiable if and only if the
instance $(G,H,\Gamma_H)$ admits an extension 
with at most $k=\beta+\gamma$ bends.
\end{restatable}

  Since the construction has polynomially many vertices and edges on a polynomial size grid, it
  can be executed in polynomial time. Moreover, by construction, 
  $V(H) = V(G)$, $E(H) = \emptyset$, and $E(G)$ is a
  matching. The statement of the theorem follows.
\end{proof}

By subdividing each non-clause edge with a $G$-vertex, and each clause
edge with two $G$-vertices, we get the following corollary.

\begin{corollary}
  It is NP-complete to decide whether a partial orthogonal drawing
  $(G,H,\Gamma_H)$ admits an extension without bends.
\end{corollary}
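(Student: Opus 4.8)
The plan is to reduce from the construction in the proof of Theorem~\ref{thm:bend-optimal-hardness}. Starting from the instance $(G,H,\Gamma_H)$ built there from a monotone planar 3-SAT formula~$\varphi$, I subdivide each non-clause $G$-edge with a single $G$-vertex and each clause edge with two $G$-vertices, obtaining an instance $(G^\ast,H,\Gamma_H)$; the partial drawing $\Gamma_H$ is unchanged, since all subdivision vertices are $G$-vertices and hence do not belong to $H$. I claim that $(G^\ast,H,\Gamma_H)$ admits a bend-free orthogonal extension if and only if $\varphi$ is satisfiable. Membership in NP is easy: a bend-free extension, if one exists, has a fixed combinatorial shape and can be realized on a grid of polynomial size, so one can guess the positions of the $G$-vertices and verify planarity and axis-parallelism of every edge in polynomial time.

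The reduction rests on the observation that replacing an edge $e=uv$ by a path of $t$ internal degree-$2$ vertices exactly caps the number of bends that $e$ may realize: in a bend-free drawing of the path each internal vertex is either a straight point or a right-angle corner, so the path realizes any polyline from $u$ to $v$ with at most $t$ corners, and conversely a drawing of $e$ with at most $t$ bends is turned into a bend-free drawing of the path by placing the subdivision vertices at the bends (and, if fewer than $t$ bends occur, at interior points of the straight segments, refining the grid if necessary). Thus a bend-free extension of $(G^\ast,H,\Gamma_H)$ corresponds precisely to an extension of $(G,H,\Gamma_H)$ in which every non-clause edge is drawn with at most one bend and every clause edge with at most two bends, and vice versa.

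It remains to check that subdivision preserves the gadget logic. A box edge $e=uv$ connects the bottom-left and top-right corners of a non-degenerate box; a bend-free path $u p_1 v$ forces the single subdivision vertex $p_1$ to coincide with either the top-left or the bottom-right corner of the box, since the two axis-parallel segments incident to $p_1$ can meet the fixed endpoints only at exactly these two points. Hence the two routings available to a one-bend box edge survive verbatim, and $e$ is necessarily realized with exactly one corner (it needs at least one, as the box is non-degenerate). Likewise, the two subdivision vertices on a clause edge realize exactly the polylines with at most two corners, i.e.\ precisely the $\le 2$-bend routings of the original clause edge, which by Claim~\ref{cl:clause} are feasible if and only if not all literal boxes of the clause are drawn top. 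Feeding this into Claim~\ref{cl:equivalence} yields that $(G^\ast,H,\Gamma_H)$ has a bend-free extension iff $\varphi$ is satisfiable. I expect the only delicate point to be confirming that the fixed positions of the $H$-vertices, together with planarity, still force each box edge's subdivision vertex to one of the two box corners even without prescribing the embedding---so that the reduction goes through for the plain existence question stated in the corollary---but this follows from the same geometric obstructions (blocker boxes and overlapping unit squares) used in Theorem~\ref{thm:bend-optimal-hardness}.
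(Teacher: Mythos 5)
Your proposal is correct and follows exactly the paper's approach: the paper derives this corollary from Theorem~\ref{thm:bend-optimal-hardness} with the single remark that one subdivides each non-clause edge with one $G$-vertex and each clause edge with two $G$-vertices. Your write-up simply fills in the details the paper leaves implicit (the correspondence between bend-free drawings of subdivided paths and $\le t$-bend drawings of the original edges, and the observation that the gadget arguments are purely geometric and thus independent of a prescribed embedding), and these details are argued correctly.
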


Similarly, we can ask whether an instance $(G,H,\Gamma_H)$ admits an
extension with at most $k$ bends per edge for a fixed number $k$.  The
construction depicted in Fig.~\ref{fig:modified-gadgets} shows how to
force an edge to use $k$ bends for any fixed number $k$.  By making
the part that enforces the first $k-1$ bends sufficiently small, we
essentially obtain the behavior of the box gadget from the proof of
Theorem~\ref{thm:bend-optimal-hardness}.

\begin{figure}
  \centering
  \includegraphics[page=1]{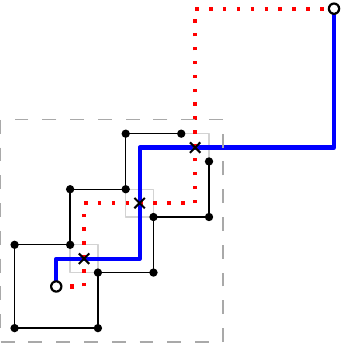}
  \caption{Gadget for forcing an edge to use $k=4$ bends.  All
    vertices and the thin solid black lines are $H$-vertices.  Up to
    minor geometric adjustments, the thick blue and dotted red lines
    show the only two ways to draw the $G$-edge between the two
    $H$-vertices $u$ and~$v$ with $k$ bends.  Scaling the
    lower left part to make it sufficiently small results in
    a construction that behaves like a box.}
  \label{fig:modified-gadgets}
\end{figure}

\begin{corollary}
  For any fixed $k \ge 2$, it is NP-complete to decide whether an instance 
  $(G,H,\Gamma_H)$ of \ortho admits an extension that
  uses at most $k$ bends per edge, even if $V(G) = V(H)$.
\end{corollary}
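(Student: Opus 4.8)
The plan is to reduce from the same monotone planar 3-SAT construction used in Theorem~\ref{thm:bend-optimal-hardness}, but with every box gadget replaced by the $k$-bend gadget of Fig.~\ref{fig:modified-gadgets}. The key observation, stated in the caption of that figure, is that the new gadget forces its connecting $G$-edge between $u$ and $v$ to use at least $k$ bends, and that---up to minor geometric adjustments---there are exactly two ways to realize it with precisely $k$ bends. These two realizations play the role of the \emph{drawn top} and \emph{drawn bottom} states of a box in the original proof. By scaling the lower-left part of the gadget (the part that enforces the first $k-1$ bends) to be arbitrarily small, the last bend of the edge behaves just like the single bend of an original box, so that the gadget can be substituted in-place wherever a box occurs in the variable, pipe, and clause gadgets.

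First I would formalize the behaviour of the modified gadget as a claim analogous to the two-state property of a box: in any drawing using at most $k$ bends per edge, the connecting edge uses exactly $k$ bends and is in one of the two admissible states, and the geometric coupling between adjacent gadgets (squares overlapping in a unit region) is preserved by the scaling argument. Then I would rebuild the variable, pipe, and clause gadgets verbatim, replacing each box by a scaled copy of the modified gadget and keeping the relative placements so that the overlap-induced implications (if $R_i$ is drawn top then $L_i$ is drawn top, the alternation along a variable gadget, the uniformity along a pipe, and so on) continue to hold. Since the clause edge in the original construction needed two bends exactly when not all literals were drawn top, and since $k \ge 2$, I would similarly replace the clause edge by a $k$-bend variant whose routing with exactly $k$ bends is possible iff at least one literal box is drawn bottom, mirroring Claim~\ref{cl:clause}.

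The remaining step is to set the bound uniformly: every non-clause edge is required to use at most $k$ bends and, by the forcing gadget, uses exactly $k$; every clause edge likewise uses at most $k$ and can meet this bound iff its clause is satisfied. Thus the instance admits a drawing with at most $k$ bends per edge if and only if the formula is satisfiable, and the reduction runs in polynomial time because the scaling only increases the grid polynomially, exactly as in the original construction. Membership in NP is immediate, since a witnessing drawing with bounded bend complexity can be encoded and verified in polynomial time.

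The main obstacle I expect is controlling the geometry under scaling: I must verify that shrinking the lower-left portion of the forcing gadget does not disturb the unit-square overlaps that transmit truth values between adjacent gadgets, and that the final bend really does inherit the clean top/bottom dichotomy of a box without introducing spurious $k$-bend routings. In other words, the delicate part is proving the analogue of the box's two-state lemma for the $k$-bend gadget---ensuring that the only two ways to achieve exactly $k$ bends are the intended ones---after which the rest of the reduction transfers mechanically from the proof of Theorem~\ref{thm:bend-optimal-hardness}.
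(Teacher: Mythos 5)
Your proposal matches the paper's own argument: the paper likewise replaces the boxes of the Theorem~\ref{thm:bend-optimal-hardness} construction by the $k$-bend forcing gadget of Fig.~\ref{fig:modified-gadgets}, scaled so that the part enforcing the first $k-1$ bends is small enough that the gadget inherits the top/bottom two-state behavior of a box. In fact, your write-up is more explicit than the paper (which leaves the corollary as a sketch), in particular in noting that the clause edge also needs a $k$-bend variant mirroring Claim~\ref{cl:clause} and in flagging the two-state lemma for the scaled gadget as the step requiring verification.
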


On the positive side, if all vertices are predrawn, the existence of
an extension with at most $k$ bends per edge can be tested efficiently
for $k=0$ and~$k=1$.

\begin{theorem}
  Let $(G,H,\Gamma_H)$ be an instance of \ortho with
  $V(G) = V(H)$ and let $k \in \{0,1\}$.  It can be tested in
  polynomial time whether $(G,H,\Gamma_H)$ admits an extension with at
  most $k$ bends per edge.
\end{theorem}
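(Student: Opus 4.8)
The plan is to exploit the fact that, since $V(G)=V(H)$, the positions of all endpoints of every $G$-edge are already fixed by $\Gamma_H$; only the routing of the $G$-edges remains to be decided, and for $k\in\{0,1\}$ each $G$-edge admits only a constant number of candidate routes. I treat the two values of $k$ separately. For $k=0$, a bend-free orthogonal edge is a single axis-parallel segment, so a $G$-edge $uv$ can be realized only if $u$ and $v$ are aligned (share their $x$- or their $y$-coordinate), in which case its route is the straight segment between them and is uniquely determined; if $u$ and $v$ are aligned in neither coordinate, the instance is rejected. For each $G$-edge I would first test the feasibility of its forced segment in isolation: it must leave $u$ and enter $v$ through free ports (as prescribed by $\Gamma_H$), it must not pass through any third vertex, and it must not cross or overlap any $H$-edge. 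If every forced segment passes these unary tests, it remains to verify that no two of them cross or overlap and that no two of them claim the same port at a common endpoint. As all routes are fixed, this is a direct geometric verification over the $O(m^2)$ pairs of $G$-edges and runs in polynomial time, where $m$ is the number of $G$-edges.

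For $k=1$, each $G$-edge has at most two candidate routes. If $u$ and $v$ are aligned, the only $\le 1$-bend route is again the straight segment (no genuine one-bend polyline connects aligned points); otherwise the only one-bend routes are the two $L$-shapes with bend at $(x_v,y_u)$ and at $(x_u,y_v)$, respectively. For every $G$-edge I compute its (at most two) candidate routes together with the pair of ports each route occupies at its two endpoints, and I discard any route that fails the unary feasibility tests above (blocked port, a third vertex lying on the route, or a crossing/overlap with an $H$-edge). If some edge loses all of its candidates, the instance is rejected. The remaining task is to select, for each $G$-edge, one of its surviving routes so that the selected routes are pairwise disjoint except at shared endpoints and respect port injectivity at every vertex.

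I would model this selection as a 2-SAT instance, introducing one Boolean variable per $G$-edge that still has two surviving candidates (edges with a single surviving candidate are fixed and contribute unit clauses). A conflict between two $G$-edges arises in only two ways: a specific pair of their routes crosses or overlaps geometrically, or a specific pair of their routes uses the same port at a common endpoint; both predicates are tested in $O(1)$ per ordered pair of route choices, since each route consists of one or two axis-parallel segments. Each such conflict forbids exactly one combination of the two edges' selections and is therefore a binary clause. The key point to argue is soundness and completeness of this encoding: that all obstructions are pairwise, so that a satisfying assignment yields a valid drawing. Port injectivity at a degree-$d$ vertex is captured by pairwise clauses, because any violation already appears on some pair of edges sharing that port; and a family of curves meeting only at common endpoints is planar if and only if no two of them cross, so pairwise non-crossing together with pairwise port-disjointness is equivalent to the existence of a planar orthogonal extension realizing the chosen routes. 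Since 2-SAT is solvable in time linear in the number of variables and clauses, and there are $O(m)$ variables and $O(m^2)$ clauses, the whole procedure is polynomial.

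The main obstacle I anticipate is the careful treatment of degeneracies in the geometric conflict tests — collinear overlaps of two routes, routes that share an endpoint and run parallel close to it, and routes passing exactly through a non-endpoint vertex — and verifying that each such situation is correctly classified as a unary infeasibility or as a binary conflict between two specific route choices. Getting this classification right is precisely what guarantees that the reduction to 2-SAT faithfully captures both planarity and the port constraints, and hence that a satisfying assignment can always be turned into a genuine orthogonal drawing extension with at most $k$ bends per edge.
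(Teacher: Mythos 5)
Your proposal is correct and follows essentially the same approach as the paper: for $k=0$ the edges are forced straight segments that are checked directly, and for $k=1$ edges with aligned endpoints are forced (the paper folds them into $H$ iteratively, you use unit clauses), after which the binary choice between the two $L$-shaped routes per remaining edge is resolved by a 2-SAT formula whose clauses encode pairwise crossing and port conflicts. Your treatment is somewhat more explicit about degeneracies and port injectivity, but the underlying argument is the same.
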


\begin{proof}
  For $k=0$ we simply draw each $G$-edge as the straight-line segment
  between its endpoints, and check whether this is a
  crossing-free orthogonal drawing.

  For $k=1$ we proceed as follows.  While there exists a $G$-edge
  $e=uv$ whose endpoints have the same $x$- or the same $y$-coordinates,
  we do the following.  If $e$ must be drawn as a straight-line (if
  $u$ and $v$ have the same $x$- or the same $y$-coordinates), the instance
  $(G,H,\Gamma_H)$ is equivalent to the instance $(G,H',\Gamma_H')$,
  where $H'$ is obtained from $H$ by adding $e$, and~$\Gamma_H'$ is
  obtained from inserting $e$ as a straight-line segment.  By applying
  this reduction rule, we eventually arrive at an instance
  $(G'',H'',\Gamma_H'')$ such that the endpoints of each $G$-edge have
  distinct $x$- and distinct $y$-coordinates.  Now for each such edge,
  there are precisely two ways to draw them with one bend. It is then
  straightforward to encode the existence of choices that lead to a
  planar drawing into a 2-SAT formula.
\end{proof}

\section{Conclusions}
\label{sec:conclusion}

In this paper we studied the problem of extending a partial
orthogonal drawing. We gave a linear-time algorithm to test the
existence of such an extension, and we proved that if one exists,
then there is also one whose edge complexity is linear in the size of
the given drawing. On the other hand, we showed that, if we also
restrict to a fixed constant the total number of bends or 
the number of bends per edge, then deciding the existence of an extension
is NP-hard.

Concerning future work we feel that the most important questions are
the following: 1) The complexity of $270 |V(H)|$ bends per edge
resulting from the transition to orthogonal drawings is significantly
worse than the one of $72|V(H)|$ bends per edge in the case of
arbitrary polygonal drawings~\cite{partial}.  Can this number 
be significantly reduced to, say, less than $100|V(H)|$?  2) As
mentioned in the introduction, Tamassia~\cite{t-emn-87} already
observed that an orthogonal representation of~$H$ can
be efficiently extended to an orthogonal representation of
$G$. However, drawing such an extension may require
to modify the drawing~$\Gamma_H$ of the given subgraph.  
Is it possible to efficiently test
whether a given orthogonal representation can be drawn such that it
extends a given drawing~$\Gamma_H$?

\bibliographystyle{splncs04}
\bibliography{ortho1}

\clearpage
\appendix

\section*{Appendix: Omitted Proofs}
\label{app:omitted}

\characterization*
\begin{proof}
	One direction is trivial; namely, if there exists an orthogonal drawing $\Gamma_G$ of $G$ that extends $\Gamma_H$, then the embedding of $G$ in $\Gamma_G$ satisfies the two properties by construction.
	Suppose now that there exists an embedding $\mathcal E_G$ of $G$ that satisfies the two properties. Since $\mathcal E_G$ is planar and extends $\mathcal E_H$, we can route each $G$-edge $uv$ as an arbitrary curve, while respecting the rotation at $u$ and $v$ in $\mathcal E_G$, without crossing any other edge. Also, the fact that $\mathcal E_G$ satisfies the port constraints in $\mathcal P_H$ implies that, for each $G$-edge $uv$, we can assign free ports of $u$ and $v$ to $uv$, in such a way that no port is assigned to more than one edge. Thus, by approximating the curve representing each $G$-edge $uv$ with an orthogonal polyline, it is possible to construct an orthogonal drawing of $G$ extending $\Gamma_H$. Note that in Theorem~\ref{orthoplanar} we will prove that this can even be done by using orthogonal polylines with a limited number of bends.
\end{proof}

\reduction*
\begin{proof}
	Suppose that $(G',H',\mathcal E_{H'})$ admits an embedding extension, and let $\mathcal E_{G'}$ be the corresponding embedding of $G'$.
	We construct an embedding $\mathcal E_{G}$ of $G$ that determines an embedding extension of $(G,H,\mathcal E_H,\mathcal P)$ that satisfies the port constraints, as follows. Let $v$ be any vertex of $G$. By construction, $v$ is also a vertex of $G'$. 
	
	Suppose first that all the neighbors of $v$ in $G'$ also belong to $G$. By construction, we have that in $(G,H,\mathcal E_H,\mathcal P)$ either there exists no $G$-edge incident to $v$, or there exists at most one $H$-edge incident to $v$. Also, in the former case, the rotation at $v$ in $\mathcal E_{H'}$ is the same as the one in $\mathcal E_{H}$, and there exists no port constraint at $v$. In the latter case, on the other hand, every rotation at $v$ in $\mathcal E_{G}$ trivially extends the rotation at $v$ in $\mathcal E_{H}$ and satisfies the (at most one) port constraint at $v$ in $\mathcal P$. Thus, in this case, we set the rotation at $v$ in $\mathcal E_{G}$ to be the same as the one in $\mathcal E_{G'}$.
	
	Suppose then that there exists exactly one neighbor of $v$ in $G'$ that does not belong to $G$. Then, by construction, this neighbor of $v$ is the vertex $w'$ that we introduced in one of the first two cases we described above. Namely, either it holds that $\deg_H(v) = 3$ and $\deg_G(v) = 4$, or it holds that $\deg_H(v) = 2$, $\deg_G(v) \geq 3$, $\mathcal P(e_1,e_2)=2$, and $\mathcal P(e_2,e_1)=0$, where $e_1$ and $e_2$ are the $H$-edges incident to $v$. In both cases, we obtain the rotation at $v$ in $\mathcal E_{G}$ by contracting the edge $vw'$, and by merging the rotations at $v$ and at $w'$ in $\mathcal E_{G'}$. This guarantees that the rotation at $v$ in $\mathcal E_{G}$ extends the rotation at $v$ in $\mathcal E_{H}$ and that the port constraint $\mathcal P(e_1,e_2)=1$ (resp.\ $\mathcal P(e_1,e_2)=2$) at $v$ is satisfied, since the edge $vw$ (resp.\ the edges $vw$ and $vz$) appear between $e_1$ and $e_2$ in $\mathcal E_{G}$.
	
	Suppose finally that there exists more than one neighbor of $v$ in $G'$ that does not belong to $G$. Then, by construction, $\deg_{G'}(v) = 4$, and the four neighbors of $v$ in $G'$ are the ones that we introduced in the last case we described above. Namely, $v$ is incident to two $H$-edges $e_1$ and $e_2$, and $\mathcal P(e_1,e_2)=\mathcal P(e_2,e_1)=1$. Observe that, since the subgraph of $G'$ induced by the vertices $v,w',z',w_1'$, and $w_2'$ is triconnected, the vertices $w',w_1',z',w_2'$ appear in the rotation at $v$ in $\mathcal E_{G'}$ either in this order or in its reverse. In the former case (the latter being analogous), we set the rotation at $v$ in $\mathcal E_{G}$ so that $w,w_1,z,w_2$ appear in this order. This trivially extends the rotation at $v$ in $\mathcal E_H$, since $\deg_H(v)=2$, and guarantees that the port constraints at $v$ are satisfied, since $w$ and $z$ use non-consecutive ports of $v$. 
	
	We further observe that, due to our transformation, the cycles in $H$ bijectively correspond to the cycles in $H'$, and that a vertex lies inside a cycle in $\mathcal{E}_H$ if and only if it lies inside the corresponding cycle in $\mathcal{E}_{H'}$. Together with the above discussion, this implies that $\mathcal E_{G}$ extends $\mathcal E_{H}$, since $\mathcal E_{G'}$ extends $\mathcal E_{H'}$. Finally, since $G'$ contains $G$ as a minor, the fact that $\mathcal E_{G'}$ is a planar embedding implies that $\mathcal E_G$ is a planar embedding, which concludes the proof of this direction.
	
	The proof for the other direction is analogous. In fact, given a planar embedding $\mathcal E_G$ of $G$ that is a solution for the instance $(G,H,\mathcal E_H,\mathcal P)$, we can construct a planar embedding $\mathcal E_{G'}$ of $G'$ that determines an embedding extension of $(G',H',\mathcal E_{H'})$, as follows.
	
	Let $v$ be any vertex of $G'$. If $v$ is also a vertex of $G$ and the all the neighbors of $v$ in $G'$ also belong to $G$, then we can set the rotation at $v$ in $\mathcal E_{G'}$ to be the same as the one $\mathcal E_{G}$, as discussed above. To cover all the other cases (either $v$ or at least one of its neighbors is not a vertex of $G$), it is enough to consider the three cases in the construction we described above.
	
	In the first two cases, the fact that $\mathcal E_{G}$ satisfies the port constraint $\mathcal P(e_1,e_2)=1$ (resp. $\mathcal P(e_1,e_2)=2$) implies that $vw$ (resp.\ both $vw$ and $vz$) appears between $e_1$ and $e_2$ in the rotation at $v$ in $\mathcal E_{G}$. Thus, inserting $vw'$ in the rotation at $v$ in $\mathcal E_{G'}$ in the same position as $vw$ (resp.\ both $vw$ and $vz$) in the rotation at $v$ in $\mathcal E_{G}$ yields a rotation at $v$ in $\mathcal E_{G'}$ that extends the one at $v$ in in $\mathcal E_{H'}$. The same trivially holds for the rotation at $w'$, since $\deg_{H'}(w')=1$.
	
	In the last case, when $v$ is incident to two $H$-edges $e_1$ and $e_2$, and $\mathcal P(e_1,e_2)=\mathcal P(e_2,e_1)=1$, the fact that $\mathcal E_{G}$ satisfies the port constraints implies that the vertices $w,w_1,z,w_2$ appear in the rotation at $v$ in $\mathcal E_{G}$ either in this order or in its reverse. In both cases, it is possible to set the rotations at $v$, $w',w_1',z',w_2'$ in $\mathcal E_{G'}$ so that the triconnected subgraph induced by these vertices is embedded according to its unique planar embedding, and all the vertices of $G'$, except for $v$, lie outside of the cycle induced by $w',w_1',z',w_2'$. Note that each of these five vertices is incident to at most one $H'$-edge, and thus every of its rotations in $\mathcal E_{G'}$ trivially extends the one in $\mathcal E_{H'}$. This concludes the proof of the lemma.
\end{proof}
\subsection*{Complete Proof of Theorem~\ref{mainthm}}
\drawing*
To prove Theorem \ref{mainthm}, we first need a couple of tools and we present those tools as lemmas before delving into the actual proof of the theorem.
\par Since the embedding of $G$ is fixed, it is enough to consider a face $F$ of $\Gamma_{H}$ and prove Theorem \ref{mainthm} for that particular face. We first show how to construct  an inner $\eps$-approximating orthogonal polygon for each facial walk of $F$ using a technique similar to the on from~\cite{partial}.

\begin{lemma}
  Let $W$ be a facial walk in a face $F$ of an orthogonal drawing of a
  graph $G$ in the plane. A inner $\varepsilon$-approximating
  orthogonal polygon $P_\varepsilon$ of $W$ can be constructed in $O(|W|)$ time so that
  $P_\varepsilon$ has at most $\max \{4,|W|+l\}$ vertices, where $l$
  is the number of degree-1 vertices in $W$.  
\label{nest-approx}
\end{lemma}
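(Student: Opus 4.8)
The plan is to build $P_\varepsilon$ by walking once around $W$ and pushing its boundary a distance $\varepsilon$ into the interior of $F$, taking special care at the tips of the spikes created by degree-1 vertices. First I would fix a scale: let $\varepsilon$ be smaller than half the minimum length of any edge-segment appearing on $W$ and smaller than half the minimum distance between any two axis-parallel segments of $W$ that are not incident in the drawing. This guarantees that the inward-shifted pieces we are about to describe fit along their edges and never collide with shifted pieces coming from another part of $W$, which is exactly the situation that arises when $W$ touches itself at a cut vertex or runs close to itself.

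Next I would perform the actual offset, processing the darts of the closed walk $W$ in order. For each edge traversed by $W$, I draw the parallel segment at distance $\varepsilon$ on the side of the face interior; since $F$ lies locally on one fixed side of each traversed dart, this side is well defined. At the vertices I distinguish three cases according to the local behaviour of $W$. If $W$ passes straight through a vertex (two collinear incident segments), the two offset segments are collinear and no corner is produced. If $W$ turns at a vertex, i.e. a convex or reflex corner of interior angle $90^\circ$ or $270^\circ$, the two offset segments meet in a single new corner, exactly as in the erosion of a rectilinear polygon. Finally, if the vertex is a degree-1 vertex $u$, then $W$ makes a $360^\circ$ U-turn at $u$: the single incident edge is traversed on both sides, and I close off the strip around it by a short axis-parallel cap placed at distance $\varepsilon$ beyond $u$, which introduces exactly two corners. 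All these pieces are local and, by the choice of $\varepsilon$, pairwise non-interfering, so their concatenation is a simple rectilinear polygon $P_\varepsilon$ lying inside $F$ and within Hausdorff distance $\varepsilon$ of $W$; this is the claimed inner $\varepsilon$-approximation.

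For the vertex count I would argue per occurrence of a vertex along the walk. Each occurrence of a non-degree-1 vertex contributes at most one corner (one if it is a turn, none if it is straight-through), whereas the unique occurrence of each degree-1 vertex contributes exactly two corners because of its cap. Writing $l$ for the number of degree-1 vertices and noting that each of them occurs exactly once on $W$, the total is at most $(|W|-l)\cdot 1 + l\cdot 2 = |W|+l$. Together with the trivial fact that the smallest simple rectilinear polygon is a rectangle with four corners, this yields the stated bound $\max\{4,|W|+l\}$. The whole procedure is a single linear scan over the darts of $W$ emitting $O(1)$ corners per step, so it runs in $O(|W|)$ time.

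The step I expect to be the crux is guaranteeing that $P_\varepsilon$ is simple and stays inside $F$ when $W$ is \emph{not} a simple cycle, i.e. reconciling the purely local offset construction with the possibly self-touching global structure of the facial walk; the correct choice of $\varepsilon$ relative to the minimum feature size of $W$ is precisely what makes the local pieces fit together without collisions. The second delicate point is the exact corner accounting at the spikes, since it is the single \emph{extra} corner per degree-1 vertex (the cap contributing two corners rather than one) that produces the additive $+l$ term in the bound rather than a larger multiplicative overhead.
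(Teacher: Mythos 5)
Your construction is essentially the paper's proof in different words: the paper places, for each corner $e,v,f$ of the walk, a point on the interior angular bisector at distance $\varepsilon$ (for $180^\circ$ angles) or $\sqrt{2}\varepsilon$ (for turns), and treats each degree-1 vertex as two degree-2 vertices joined by an infinitely short edge --- which is exactly your inward edge-offset with a two-corner cap at the spikes, and your accounting $(|W|-l)\cdot 1 + l\cdot 2 = |W|+l$ matches the paper's bound. The only point you gloss over is that a facial walk may consist of a single isolated vertex, where there is no edge to offset; the paper handles this case separately with a square of side length $2\varepsilon$ centered at the vertex, which is precisely where the $4$ in $\max\{4,|W|+l\}$ comes from.
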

\begin{proof}
  If $W$ is an isolated vertex $v$, then approximate $W$ with a square of sidelength~$2 \eps$ centered at $v$. Next, assume that $W$ contains more than one vertex. 
  We consider each vertex of degree~1 in $W$ as a sequence of two degree-2 vertices that are connected by an infinitely short edge that forms a~$270^\circ$-angle with the single edge incident to $v$ inside $F$. 
  Consider a corner $e,v,f$ of $W$, where $e$ and $f$ are two consecutive edges and $v$ is their shared vertex.  Let~$\alpha$ denote the angle formed by~$e$ and~$f$ inside~$F$.  If~$\alpha$ is a $180^\circ$-angle, then we choose~$v'$ as the point on the angular bisector of~$\alpha$ at distance~$\eps$ from~$v$.  Otherwise, we choose~$v'$ as the point on the angular bisector of~$\alpha$ at distance~$\sqrt{2}\eps$ from~$v$.
\begin{figure}[tb]
\begin{center}
\includegraphics{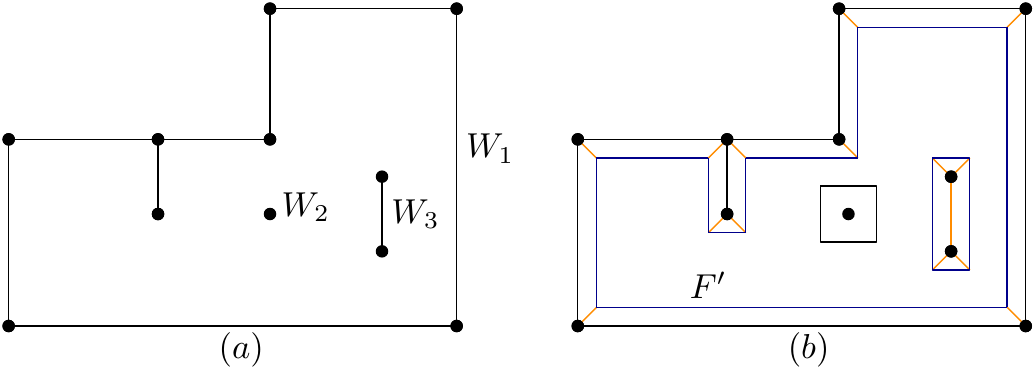}
\caption{(a) A face with outer walk $W_1$ and, inner facial walks $W_2$ and $W_3$. (b) An approximation $F^\prime$ of $F$.}
\label{fig:epsilon}
\end{center}
\end{figure}
If $(v_i)_{i=1}^{k}$ is the sequence of vertices in $W$, then by joining $(v_i^\prime)_{i=1}^{k}$, we get an orthogonal polygon that ~$\eps$-approximates $W$.
\end{proof}

We now prove two auxiliary lemmas, which follow the structure of Lemmas $5$ and $6$ in \cite{partial}. Assume that $G$ is a Hamiltonian graph with Hamiltonian cycle $C$. Lemma \ref{hamcyc} provides a method to draw the edges of $C$, assuming that the vertex locations are fixed. Lemma \ref{hamgraph} explains how to draw the remaining edges of $G$. 
\begin{lemma}
 Let $C$ be a cycle with fixed vertex locations, and suppose we are given an orthogonal planar drawing of a tree $T$, in which the vertices of $C$ are leaves of $T$ at their fixed locations and each edge of $T$ has at most $k$ bends. Then for every $\varepsilon >0$ there is a planar Kandinsky drawing of $C$ with at most $3k|E(T)|$ bends per edge and $\varepsilon$-close to $T$.
 \label{hamcyc}
\end{lemma}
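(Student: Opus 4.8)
The plan is to route the entire cycle~$C$ as a single closed orthogonal curve that hugs the given drawing of~$T$ from inside a thin tubular neighbourhood. Concretely, I would first regard~$T$ as a single closed walk~$W_T$: the Euler tour that traverses the drawing of each tree edge twice (once along each side) and visits each leaf once at its tip. Since~$T$ is a tree, this tour bounds a thin, simply-connected neighbourhood of~$T$, and applying Lemma~\ref{nest-approx} to~$W_T$ yields, in linear time, an inner $\varepsilon$-approximating orthogonal polygon~$P_\varepsilon$ that is $\varepsilon$-close to~$T$. Choosing the approximation width below~$\varepsilon$ gives the required closeness, and a local detour of $O(1)$ bends near each leaf lets~$P_\varepsilon$ pass exactly through the fixed location of that leaf. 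The drawing of~$C$ is then taken to be~$P_\varepsilon$, with the leaves of~$T$ as its vertices; since~$P_\varepsilon$ is a simple orthogonal polygon, the drawing is automatically planar.

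For this to be a valid drawing of~$C$ (and not of some other cycle) the cyclic order in which~$W_T$ encounters the leaves must coincide with the order of the vertices along~$C$. This compatibility is exactly what the construction following~\cite{partial} provides: the tree~$T$ is built so that the vertices of~$C$ appear around it in the cyclic order of~$C$. (Some such hypothesis is unavoidable: for a star whose leaves carry two ``crossing'' chords of~$C$, every routing along tree paths forces a crossing.) Granting it, $C$~is genuinely realised as the closed curve~$P_\varepsilon$. The reason we can only guarantee a Kandinsky, rather than a pure orthogonal, drawing shows up at the leaves: at a leaf~$v$, which has degree~$2$ in~$C$, both incident cycle edges leave~$v$ alongside its single tree edge, i.e.\ towards the same side of~$v$. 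The Kandinsky model~\cite{kaufmann} is designed to allow several edges to emanate from one side of a vertex, so the two cycle edges can share that port. The interior nodes of~$T$ are mere scaffolding and carry no vertex of~$C$, so no angular condition needs to be checked there.

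For the bend count I would argue per edge. The edge~$c_ic_{i+1}$ of~$C$ is the portion of~$P_\varepsilon$ running from leaf~$c_i$ to leaf~$c_{i+1}$, which follows one side of the tree path between these two leaves; that path uses at most~$|E(T)|$ tree edges. Along each such tree edge the approximating boundary is an orthogonal offset of the edge, so it inherits its at most~$k$ bends, and turning around each incident tree node costs at most two further bends. Hence each edge of~$C$ has at most $(k+2)\,|E(T)| \le 3k\,|E(T)|$ bends, where the last inequality uses $k \ge 1$ (which we may assume without loss of generality by rounding the budget up).

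I expect the main obstacle to be the compatibility point of the second paragraph: making precise that the cyclic order of~$C$ agrees with the tour order of the leaves of~$T$, and that this is guaranteed by the surrounding construction rather than merely assumed. Once that is in place, the remaining tasks — checking that the offsetting of~$P_\varepsilon$ respects the Kandinsky angle rules at the bundled leaf ports, and that the $O(1)$ local detours to the exact leaf positions do not spoil the per-edge bound — are routine bookkeeping.
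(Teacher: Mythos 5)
There is a genuine gap, and you have in fact put your finger on it yourself: your construction routes $C$ as a \emph{single} closed curve (the boundary of a thin tubular neighbourhood of $T$), which yields a drawing of $C$ only if the cyclic order in which an Euler tour of $T$ meets the leaves coincides with the cyclic order of the vertices along $C$. You then discharge this by asserting that ``the tree $T$ is built so that the vertices of $C$ appear around it in the cyclic order of $C$.'' That assertion is neither part of the lemma statement nor provided by the surrounding construction: in the place where Lemma~\ref{hamcyc} is applied (Lemma~\ref{lemma3}, inside the proof of Theorem~\ref{mainthm}), the Hamiltonian cycle arises from the Pach--Wenger augmentation (Lemma~\ref{makeham}) of the multigraph $G_F$, whereas $T$ is a spanning tree of the rectangulation of the face with leaves attached near the facial walks; the cyclic order of the leaves around $T$ is dictated by the geometry of the face and has no relation to the order of $C$. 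Your star example also does not show that ``some such hypothesis is unavoidable''---it only shows that your single-offset-curve routing needs it. A $4$-cycle visiting the leaves of a star in ``crossing'' order \emph{can} be drawn planar, with fixed vertex locations, arbitrarily close to the star; it just cannot be drawn as one offset curve of the star.

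The paper's proof (following Lemma~5 of \cite{partial}, which in turn follows Pach and Wenger \cite{pach}) is built precisely to avoid any ordering hypothesis: it constructs $n$ \emph{nested} approximations $\theta_1,\dots,\theta_n$ of the drawing of $T$ at increasing distances (all below $\varepsilon$), assigns to each cycle edge $p_ip_{i+1}$ its own private layer $\theta_i$, modified to a curve $\theta_i^\prime$ by ignoring the parts of $\theta_i$ corresponding to the leaves $p_{i+1},\dots,p_n$, and routes $p_ip_{i+1}$ along $\theta_i^\prime$. Because each edge may wind around the entire tree inside its own layer, edges realizing an arbitrary cyclic order of the leaves never cross one another, and each edge inherits at most $3k|E(T)|$ bends because it follows a portion of a single approximating polygon. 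To repair your argument you would either have to prove the compatibility of the two cyclic orders (which is false in general and false in the application), or replace the single curve $P_\varepsilon$ by the nested multi-layer routing---and the latter is exactly the paper's proof.
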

\begin{proof}
Let $p_1,\ldots ,p_n$ be the vertices of the cycle $C$ in order. To construct a  planar poly-line drawing of $C$, Lemma 5 of \cite{partial} explains a method as follows. First of all, $n$ $\varepsilon$-approximations $\theta_i(1\leq i \leq n)$ of the given drawing of $T$ are 
constructed, using Lemma~\ref{nest-approx}. Then another poly-line polygonal curve $\theta_i^\prime$ is constructed from $\theta_i$ by ignoring the parts of $\theta_i$ corresponding to the vertices $v_{i+1},\ldots ,v_n$. The edge $p_ip_{i+1}$ is routed through $\theta_i^\prime$. 
In order to draw the edges of $C$, we follow the same method explained above by constructing $(i\varepsilon/n+1)$-approximation $\theta_i$ of the given orthogonal drawing of $T$ using Lemma \ref{nest-approx}, for $1\leq i\leq n$, and by routing the edges of $C$ through corresponding $\theta_i^\prime$'s. An example is given in Fig.~\ref{lemma2fig}.

\begin{figure}[h]
\begin{center}
\includegraphics{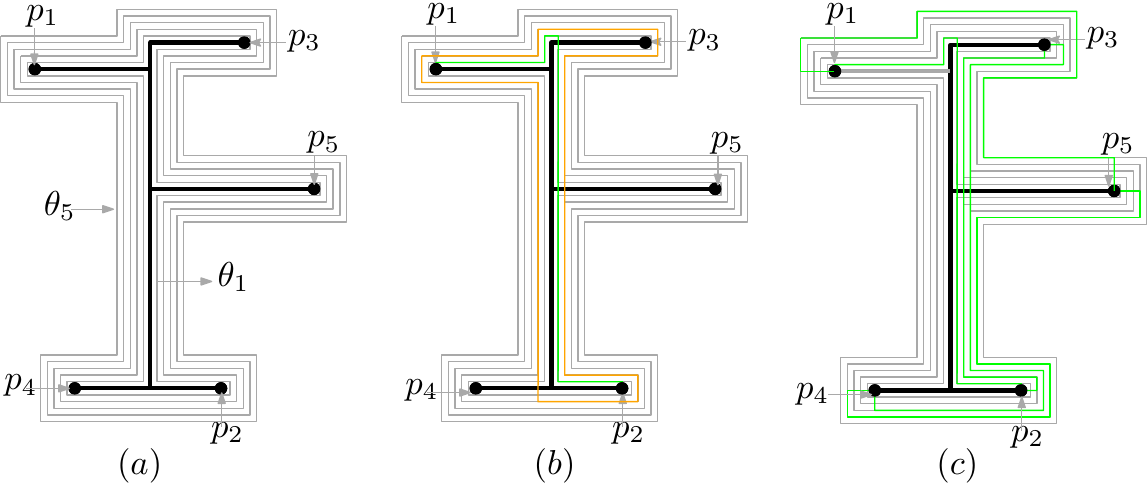}
\caption{(a) An orthogonal drawing of a tree $T$ together with approximations $\theta_i$. (b) The edge $p_1p_2$ is drawn in green color and $\theta_2^\prime$ is drawn in orange color. (c) Complete orthogonal drawing of $C$ with respect to $T$ }
\label{lemma2fig}
\end{center}
\end{figure}
Here, note that each edge of $C$ is replaced with a part of an approximation of  $\theta_i$ and  $\theta_i$ has at most $3k|E(T)|$ edges. Hence each edge of $C$ is replaced with an orthogonal arc that has at most $3k|E(T)|$ bends.
\end{proof}

\begin{lemma}
Let $G$ be a Hamiltonian multigraph with a given planar embedding and fixed vertex locations. Suppose we are given an orthogonal drawing of a tree $T$ whose leaves include all the vertices of $G$ at their fixed locations and each edge of $T$ has at most $k$ bends. Then for every $\varepsilon > 0$ there is a planar Kandinsky drawing of $G$ so that
\begin{enumerate} 
\item the drawing is $\varepsilon$-close to T,
\item the drawing realizes the given embedding,
\item the vertices of $G$ are at their fixed locations, 
\item every edge has at most $6k|E(T)|$ bends, and 
\item every edge comes close to any leaf of $T$ at most twice, and only does so by terminating at or bending near the leaf.
\end{enumerate}
\label{hamgraph}
\end{lemma}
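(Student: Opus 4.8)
The plan is to build the drawing in two phases: first realize a fixed Hamiltonian cycle $C = p_1 p_2 \cdots p_n$ of $G$, and then insert the remaining non-cycle edges as chords. For the first phase I would invoke Lemma~\ref{hamcyc} directly: it produces a planar Kandinsky drawing of $C$ that is $\varepsilon$-close to $T$, keeps every $p_i$ at its fixed location, and uses at most $3k|E(T)|$ bends per cycle edge. This already secures conditions~(1)--(3) for the cycle edges and fixes, up to the two sides of $C$, the topology in which the chords must live.

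The second phase exploits the interaction between Hamiltonicity and planarity. In a planar Hamiltonian multigraph the cycle $C$ separates the remaining edges into those embedded inside $C$ and those embedded outside $C$, and on each of the two sides the chords are pairwise non-crossing; hence on each side they form a laminar family of nested arcs (a \emph{2-page book embedding} with spine $C$). I would route the chords accordingly: for each side, process the chords from the outermost to the innermost in the nesting order, drawing each one as an arc that hugs $T$ just like a cycle edge but at a slightly larger offset than the chord that encloses it, so that nested chords are realized by nested curves and disjoint chords by disjoint curves. Concretely, a chord $p_ip_j$ is routed through an $\varepsilon$-approximation of $T$ between the leaves $p_i$ and $p_j$, analogous to the construction in Lemma~\ref{hamcyc}, choosing the approximation parameter of each chord distinct from those of its neighbours in the nesting order to guarantee planarity. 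Since the rotation at each vertex is prescribed, at each $p_i$ I attach the incident cycle edges and chords in the given circular order; here the Kandinsky model is essential, because a vertex of $G$ may have degree larger than $4$ and several edge-ends must then share a side of its box.

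For the bend count, each chord is routed through an approximation of $T$ of complexity $O(k|E(T)|)$; because a chord joins two leaves that need not be consecutive in the cyclic order, its arc may have to traverse the approximation in both directions to reach its endpoints while staying on the prescribed side of $C$, which yields the bound of $6k|E(T)|$ bends per edge, twice the per-edge bound of the cycle lemma, giving condition~(4). Condition~(5) follows from the controlled routing: by keeping each chord within a thin strip along $T$ and letting it approach a leaf only at its endpoints or when it must bend to pass around that leaf, every edge comes close to any leaf at most twice, and only by terminating at or bending near it. Conditions~(1)--(3) are inherited from the approximation construction, which stays $\varepsilon$-close to $T$, fixes the vertex positions, and reproduces the prescribed embedding.

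The main obstacle I expect is the simultaneous realization of the prescribed planar embedding and the bounded bend count while keeping the drawing crossing-free. The delicate point is to order and offset the nested chords on each page so that the nesting is drawn faithfully without introducing crossings with $C$, with one another, or near the shared leaves, all while respecting the fixed rotation at high-degree vertices in the Kandinsky model and not exceeding $6k|E(T)|$ bends. Verifying condition~(5), that each edge stays near each leaf at most twice, is the technical heart of the argument, since it is precisely this property that later makes the removal of the auxiliary parts of the construction clean.
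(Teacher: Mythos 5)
Your phase~1 matches the paper exactly: both draw the Hamiltonian cycle $C$ with Lemma~\ref{hamcyc}. The divergence, and the gap, is in phase~2. The paper (closely following Lemma~6 of \cite{partial}) does not route each chord as a single arc whose offset is governed by the laminar two-page structure; instead, for every vertex $p_i$ and every edge index $\ell\in\{1,\dots,m\}$, where $m=|E(G)|$, it builds a separate truncated approximation $\Delta_{i,\ell}$ of the curve $\theta_i'$, and draws the chord $p_ip_j$ (the $\ell$-th edge) as the \emph{concatenation of two pieces}: one along $\Delta_{i,\ell}$ anchored at $p_i$, one along $\Delta_{j,\ell}$ anchored at $p_j$, joined where the two curves meet. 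This single device is what simultaneously yields all three delicate conclusions. The bend bound $6k|E(T)|$ is literally two pieces of at most $3k|E(T)|$ bends each, rather than your informal ``may have to traverse the approximation in both directions''. Planarity among chords follows because curves with distinct indices $\ell$ are nested copies of one another, which handles nested and disjoint chords uniformly. Condition~(5) follows because each of the two pieces is anchored at one endpoint and truncated so that it only terminates at or bends near a leaf; hence an edge, being made of two such pieces, comes close to any leaf at most twice.

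Your replacement scheme does not secure these properties. First, requiring offsets to be ``distinct from those of its neighbours in the nesting order'' is insufficient: two chords that are unrelated in the laminar order can still have overlapping routes along $T$ (their tree paths may share edges), so they too need distinct, consistently ordered offsets; moreover, an arc descending from its offset to its endpoint $p_j$ must not cross arcs of other chords that merely pass near $p_j$, which is exactly the situation the per-endpoint truncated curves $\Delta_{j,\ell}$ are designed to resolve. Second, your bound of $6k|E(T)|$ and condition~(5) are asserted rather than derived; you yourself defer them as ``the main obstacle'' and ``the technical heart''. In this lemma those are not residual details but the entire content: once the concatenation-of-two-anchored-approximations idea is in place, the rest is bookkeeping. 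So the proposal is missing the key idea of the paper's proof rather than offering a complete alternative route to the statement.
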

\begin{proof}
We closely follow Lemma 6 of \cite{partial} to construct a planar poly-line drawing of $G$ that works as follows. Using Lemma 5 of \cite{partial}, a planar poly-line drawing of $C$ with respect to the given drawing of $T$ is constructed. Next, $m$ approximations $\Delta_{i,k}$ of $\theta_i^\prime$ are constructed for each $1\leq i\leq n$ and $1\leq k \leq m$, where $m=|E(G)|$. To route an edge $p_ip_j$, the path concatenating the straight-line polygons $\Delta_{i,k}$ and $\Delta_{j,k}$ is used. To construct a planar Kandinsky drawing of $G$, we continue in a similar manner. First, we route the edges of the Hamiltonian cycle $C$ using Lemma \ref{hamcyc} and then route the remaining edges by creating additional approximations of the curves $\theta_i^\prime$. Here, corresponding to an edge at most $k\times 6(E(T))=6kE(T)$ bends are introduced, since an edge $p_ip_j$ is a concatenation of two approximations $\Delta_{i,k}$ and $\Delta_{j,k}$.  An example is illustrated in Fig.~\ref{lemma3fig}.
\begin{figure}[h]
\begin{center}
\includegraphics{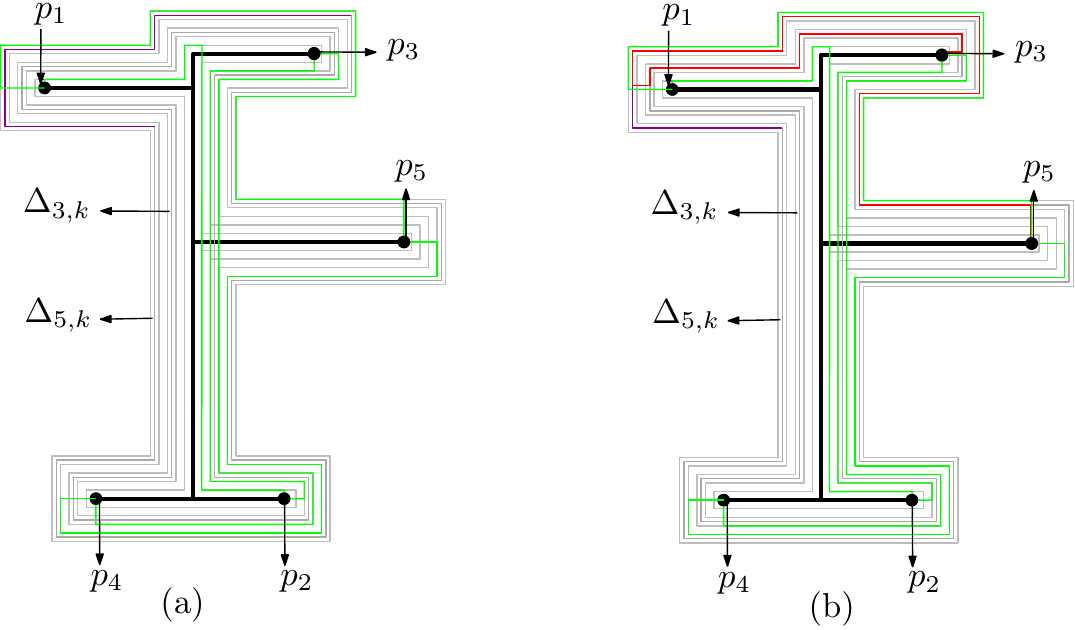}
\caption{The edge $p_3p_5$ is drawn. (a) The polygons $\Delta_{3,k}$ and $\Delta_{5,k}$ are drawn in gray color. (b) The edge $p_3p_5$ is drawn in red color using parts of $\Delta_{3,k}$ and $\Delta_{5,k}$.  }
\label{lemma3fig}
\end{center}
\end{figure}
\end{proof}

Now, in order to make the given graph Hamiltonian, we use the following result by Pach and Wenger \cite{pach}.

\begin{lemma}[Pach, Wenger \cite{pach}] For a planar graph $G$ a Hamiltonian planar graph $G'$ with  $|E(G^\prime)|\leq 5|E(G)|-10$ can be constructed from $G$ by subdividing and adding edges in linear time.  The construction is such that each edge of $G$ is subdivided by at most two new vertices.
 \label{makeham}
 \end{lemma}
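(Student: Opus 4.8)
The plan is to fix a planar embedding of $G$ and build $G'$ in three phases: normalise $G$ by adding edges, route a Hamiltonian cycle through all vertices of a bounded subdivision of $G$, and finally bound the number of edges by a direct count. Throughout, $G \subseteq G'$ will hold as a topological minor, i.e.\ $G'$ contains a subdivision of $G$, and the Hamiltonian cycle will be composed of edges that are either subdivision pieces of $G$-edges or newly added arcs.

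First I would make $G$ connected by inserting edges inside its faces; this only adds edges and preserves planarity, and it lets me fix a spanning tree $T$ together with a combinatorial embedding $\mathcal E$. The core step is then to produce a Hamiltonian cycle through all vertices. My preferred realisation uses the \emph{boundary of the spanning tree}: in $\mathcal E$, the boundary of a thin neighbourhood of $T$ is a single Jordan curve that passes each vertex $v$ exactly $\deg_T(v)$ times, weaving between the tree edges around $v$. I would turn this closed curve into a simple graph cycle by pinning it to $v$ at one of its passes and, at every other pass, rerouting it through a new subdivision vertex placed on the corresponding incident tree edge. Since the curve runs along each tree edge on its two sides, this introduces at most two new vertices per edge, exactly matching the statement; because the boundary of a tree is already a single Jordan curve, no extra closing chords are needed. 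The arc on one side of each tree edge is identified with the (subdivided) original edge, the arc on the other side becomes an added edge parallel to it, and the non-tree edges of $G$ are reinserted as (possibly subdivided) chords inside the faces they already occupy. (Equivalently, one may take a triangulation containing $G$, fix a canonical ordering $v_1,\dots,v_n$ of de~Fraysseix--Pach--Pollack, and build the cycle incrementally: when $v_{k+1}$ is stacked onto the outer path $w_p,\dots,w_q$ of its earlier neighbours, it is spliced into the current cycle, and a subdivision vertex is spent only when an edge of the cycle would otherwise be forced to repeat a vertex.)

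To bound the size, let $m=|E(G)|$ and $n=|V(G)|$. Each $G$-edge contributes at most three pieces to the subdivision, the arcs added parallel to the tree edges contribute $O(n)$ further edges, and a direct accounting of these two contributions, together with $n \le m+1$, yields the stated bound $|E(G')| \le 5|E(G)| - 10$. All ingredients run in linear time: the planar embedding and the spanning tree are computed in $O(|V(G)|+|E(G)|)$ time, the boundary traversal of $T$ (or the canonical ordering) is linear, and the local surgery spends $O(1)$ time per vertex and per pass, so the entire construction is linear.

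The hard part will be the routing step rather than the counting. The delicate point is to guarantee that a single simple cycle can be made to pass through every vertex while (i) subdividing each edge at most twice and (ii) leaving enough room in every face to reinsert the non-tree edges of $G$ and the parallel arcs without creating crossings. Showing that this local surgery at each vertex and along each edge is always globally consistent with a single planar embedding --- equivalently, that the chosen bounded subdivision of $G$ is genuinely a subgraph of a planar Hamiltonian graph --- is the combinatorial heart of the lemma; once it is in place, the edge count and the linear running time follow by routine bookkeeping.
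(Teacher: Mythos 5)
A preliminary remark: the paper does not prove this statement at all --- it is imported verbatim (via \cite{partial}) from Pach and Wenger \cite{pach} --- so your attempt can only be measured against that known construction. Your skeleton is in fact the right one and matches it: make $G$ connected, fix an embedding, take a spanning tree $T$, follow the contour of a thin neighbourhood of $T$ (which passes each vertex $v$ exactly $\deg_T(v)$ times), and pin the contour to each vertex at exactly one of its passes.

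The gaps are in what you leave out. First, the non-tree edges, which you dispose of in one clause (``reinserted as (possibly subdivided) chords inside the faces they already occupy''), are exactly where the work lies. A non-tree edge $f=uv$ has interior disjoint from $T$, so for a thin enough neighbourhood it crosses the contour at most twice, once near each endpoint (and not at all at an endpoint whose pinned pass covers the sector in which $f$ leaves it). Those crossing points \emph{must} be promoted to subdivision vertices: this is the mechanism that yields ``at most two new vertices per edge'', keeps the drawing planar, and keeps the cycle Hamiltonian, since these new vertices automatically lie on the cycle. By contrast, if the bypass at a non-pinned pass is just an arc swinging around $v$ inside its sector, touching nothing, then tree edges never need to be subdivided at all; your choice of placing bypass vertices on the tree edges is not what the bound rests on, and making that variant consistent (each bypass must attach the parallel arc of one edge to a subdivision vertex of the other, which requires a parity argument about how the two traversals of each edge alternate around the contour) is extra work you do not supply. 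Second, the count, which you call ``routine bookkeeping'', does not follow from the contributions you list. Each non-tree edge contributes up to $3$ pieces of itself \emph{plus} $2$ additional cycle edges, because its two crossing points each split a contour arc in two; this $5$-per-non-tree-edge contribution is precisely where the coefficient $5$ comes from and is absent from your tally of ``at most three pieces per edge plus $O(n)$ parallel arcs''. Taken literally your tally is $3m+cn$ with $c\approx 4$ (subdivided tree edges, parallel arcs, bypasses), and $3m+4n\le 5m-10$ fails for sparse graphs, e.g.\ for trees, so invoking $n\le m+1$ does not rescue it. A correct account --- keep the $n-1$ tree edges intact, give each of the $m-n+1$ non-tree edges at most $3$ pieces, and note that the cycle has one arc per vertex it visits, hence at most $n+2(m-n+1)$ arcs --- gives $|E(G')|\le 5m-3n+4\le 5|E(G)|-10$ for $n\ge 5$. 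Finally, you yourself flag the global consistency of the local surgery as ``the combinatorial heart of the lemma'' and leave it open; so what you have is a correct plan in the spirit of Pach--Wenger, but with its central verification and its bookkeeping both missing.
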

Next, we assume that a planar embedding of the graph $G$ together with a set of vertices $U\subseteq V(G)$ is given, where every element of $U$ has a fixed location. The next lemma shows a method to route the edges of $G$ by converting it into a Hamiltonian graph and then contracting the edges if  at least one of its end point is not $U$. Finally, we undo the edge contractions to obtain a drawing of the original graph $G$. 

\begin{lemma}
Let $G$ be a multigraph with a given planar embedding and fixed locations for a subset $U$ of its vertices. Suppose we are given an orthogonal drawing of a tree $T$ whose leaves include all the vertices in $U$ at their fixed locations and each edge of $T$ has at most $k$ bends. Then for every $\varepsilon > 0$ there is a planar Kandinsky drawing of $G$ so that
\begin{enumerate}
\item the drawing is $\varepsilon$-close to $T$,
\item the drawing realizes the given embedding,
\item the vertices in $U$ are at their fixed locations, and
\item  each edge has at most $18k|V(T)|$ bends and comes close to each vertex $u$ in $U$ at most $6$ times, where coming close to u means intersecting an $\varepsilon$-neighborhood of $u$. Furthermore, any edge that comes close to $u$ will either terminate at $u$ or enter the $\varepsilon$-neighborhood of $u$, bend at a point in this $\varepsilon$-neighborhood, and then leave it.
\end{enumerate}\label{lemma3}
\end{lemma}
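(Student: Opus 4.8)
The plan is to follow the contraction strategy sketched before the lemma: turn $G$ into a Hamiltonian multigraph all of whose vertices lie in $U$, draw that multigraph along $T$ with Lemma~\ref{hamgraph}, and then undo the contractions inside tiny neighborhoods of the fixed vertices, finally suppressing the subdivision vertices to recover $G$.

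\emph{Reduction to a Hamiltonian multigraph on $U$.} First I would apply Lemma~\ref{makeham} to obtain a Hamiltonian planar multigraph $G'$ with Hamiltonian cycle $C$, extending the given embedding, in which every edge of $G$ has been subdivided by at most two new vertices; in particular each $G$-edge becomes a path of at most three edges of $G'$, and none of the new vertices lies in $U$. Since $C$ is Hamiltonian, every vertex not in $U$ lies on $C$. I would then contract, for each maximal run of consecutive non-$U$ vertices along $C$, the run together with exactly one of its two bounding cycle edges, merging it into a single adjacent $U$-vertex. Contracting sub-paths of $C$ preserves Hamiltonicity and planarity of the embedding, contracting only edges incident to non-$U$ vertices guarantees that no two vertices of $U$ are merged, and keeping one bounding edge uncontracted prevents a single blob from swallowing two $U$-vertices separated by one non-$U$ vertex. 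The result is a Hamiltonian multigraph $G''$ (possibly with multi-edges and self-loops) whose vertex set is exactly $U$, with the prescribed fixed locations, inheriting a planar embedding from $G'$. The crucial bookkeeping is that, since a $G$-edge spans at most three edges of $G'$ and contraction never increases this number, each edge of $G$ corresponds to at most three edges of $G''$.

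\emph{Drawing $G''$.} Next I would invoke Lemma~\ref{hamgraph} on $G''$ with the tree $T$, using the parameter $\varepsilon/2$ in place of $\varepsilon$. Its leaves contain all of $U = V(G'')$ by hypothesis, so the lemma yields a planar Kandinsky drawing of $G''$ that is $\varepsilon/2$-close to $T$, realizes the inherited embedding, keeps the vertices of $U$ fixed, draws each edge with at most $6k|E(T)|$ bends, and lets each edge come close to each leaf at most twice, always terminating at or bending near it. Reserving the remaining $\varepsilon/2$ slack in a ball around each fixed location for the next step will keep the final drawing $\varepsilon$-close to $T$.

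\emph{Undoing the contractions, and the accounting.} Finally I would reverse the contractions. Each $u \in U$ of $G''$ represents a tree-shaped blob $B_u$ of vertices of $G'$ merged into it; I would re-expand $B_u$ inside the reserved $\varepsilon/2$-ball around $u$, placing $u$ at its fixed location, drawing the small blob tree, and re-attaching the incident edges of $G''$ to the correct blob-vertices in the cyclic order prescribed by the embedding of $G'$. Suppressing the subdivision vertices then recovers a drawing of $G$ itself. For the bounds, every $G$-edge is the concatenation of at most three edges of $G''$ together with negligible intra-blob pieces drawn inside the $\varepsilon$-balls; since the $G''$-pieces already bend near the blob centers, concatenation adds no further bends, so each $G$-edge carries at most $3 \cdot 6k|E(T)| = 18k|E(T)| \le 18k|V(T)|$ bends, and it comes close to any single $u$ at most $2 \cdot 3 = 6$ times, each time either terminating at $u$ or entering, bending, and leaving its $\varepsilon$-neighborhood, exactly as required. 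The hard part is this last step: splitting the merged rotation system back into the rotations of $G'$ and realizing it geometrically within each tiny ball, while reconnecting the edges in the correct cyclic order without introducing crossings or extra bends. The Kandinsky model is what makes the whole scheme viable, since the run-contractions of the first step create high-degree super-vertices that no genuine orthogonal drawing could accommodate, while the bounded degree of $G$ is restored only after the blobs are re-expanded.
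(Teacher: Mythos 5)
Your proposal follows the same overall route as the paper: Hamiltonize $G$ via Lemma~\ref{makeham}, contract along the Hamiltonian cycle so that the vertex set becomes exactly $U$, draw the contracted multigraph $G''$ along $T$ with Lemma~\ref{hamgraph}, and then undo the contractions inside small neighborhoods of the fixed vertices. However, there is a genuine gap exactly where you yourself flag ``the hard part'': you never give a method for re-expanding a blob $V_u$ inside the disk around $u$ and reconnecting the incident edges in the correct cyclic order without crossings. This step is the technical core of the lemma, and the paper resolves it with a concrete tool: for each $u$ it forms the auxiliary planar multigraph $G'_u$ on $V_u \cup N_{G''}(u) \cup \{u_1,u_2,u_3,u_4\}$, where $u_1,\dots,u_4$ are joined in a $4$-cycle and represent the four sides of the disk at which the edges of the Kandinsky drawing $\Gamma''$ attach (the neighbors are partitioned into the four groups $V_{u_1},\dots,V_{u_4}$ accordingly). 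Since $G'_u$ is planar, it admits a Kandinsky drawing with at most two bends per edge, computable in linear time by the algorithm of F\"o{\ss}meier and Kaufmann~\cite{kaufmann}; pasting this drawing into the disk (ignoring the $u_i$) reconstructs $G_u=G'[V_u]$ and routes the entering edges consistently with the embedding. Without some such device, ``re-attaching the incident edges in the cyclic order prescribed by the embedding'' is an assertion, not a construction.

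This omission also makes your bend accounting wrong in a specific way: you claim the re-expansion ``adds no further bends,'' whereas the paper's construction charges two extra bends per edge of $G'$ for precisely this step, giving $6k|E(T)|+2$ bends per $G'$-edge and hence $3(6k|E(T)|+2)=18k|E(T)|+6<18k|V(T)|$ bends per $G$-edge (using $|E(T)|=|V(T)|-1$ and $k\ge 1$). Your final bound happens to survive because $18k|E(T)|\le 18k|V(T)|-18k$ leaves enough slack to absorb a constant number of extra bends per edge, but the step as you state it is unjustified; the correct argument must both exhibit the local drawing and account for the bends it introduces. The rest of your accounting (each $G$-edge splits into at most three $G'$-edges, hence comes close to any $u\in U$ at most $2\cdot 3=6$ times, each time terminating or bending there) matches the paper.
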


\begin{proof}
From a given graph $G$, construct a Hamiltonian graph $G^\prime$ with a Hamiltonian cycle $C$ by subdividing each edge of $G $ at most twice, and by adding some edges using Lemma \ref{makeham}. 
We traverse through $C$ and whenever we encounter an edge $e$ that has at least one endpoint not in $U$, then we contract $e$. Continue this process to get a multigraph $G^{\prime \prime}$ with a Hamiltonian cycle $C^\prime$ such that $V(G^{\prime \prime})=U$. 
Now, using Lemma \ref{hamgraph}, find an orthogonal drawing $\Gamma''$ for $G''$ with respect to~$T$. Fix a vertex $u\in V(G^{\prime \prime})$ and let $V_u$ be the vertices of $G$ that have been contracted into $u$. Next, we have to reconstruct the subgraph $G_u=G^\prime[V_u]$ and route the edges that connect vertices from $V_u$ to $V(G^{\prime })\setminus V_u$. To reconstruct $G_u$, construct a small disk around $u$ in $\Gamma''$. Since $\Gamma''$ is an orthogonal drawing, we can cover $N_{G^{\prime \prime}}(u)$  into four sets $(V_{u_i})_{i=1}^{4}$ depending on the side of $v$ to which its edge attaches.
Now, let $G^{\prime}_u=(V,E)$ with $V=V_u\cup N_{G^{\prime \prime}}(u)\cup\{u_1,u_2,u_3,u_4\}$ and $E=E(G_u)\cup\{u_1u_2, u_2u_3,u_3u_4,u_4u_1\}\cup\{u_ix: x\in V_{u_i} \text{ for } 1\leq i \leq 4\}\cup \{yu_i:\text{ there exists an edge } yx \text{ in }G^\prime \text{ with }x\in V_{u_i} \text{ and }y\in V_{u}\}$. 
\begin{figure}[tb]
\begin{center}
\includegraphics{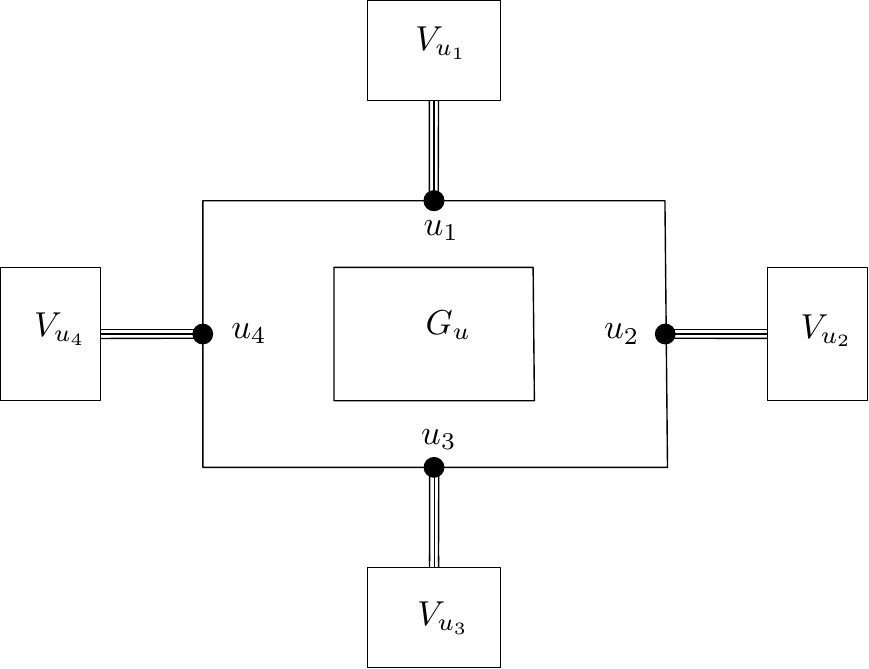}
\caption{The graph $G^{\prime}_u$.}
\end{center}
\end{figure}

Note that $G^{\prime}_u$ is a planar multigraph and hence it has a Kandinsky drawing $\Gamma_u$ with at most two bends on each edge that can be computed in linear time~\cite{kaufmann}. Using $ \Gamma_u$ we can route the edges that connect $V_u$ and $V_{u_i}$ by ignoring the vertex $u_i$. Thus we get a Kandinsky drawing of $G^\prime$ with at most $6|E(T)|+2$ bends per edge (using Lemma \ref{hamgraph} and the two extra bends that are added while reconstructing $G_u$). Since each edge of $G$ is subdivided at most twice to get $G^\prime$, each edge of $G$ has $3(6k|E(T)|+2)=18k|E(T)|+6<18k|V(T)|$ bends. 
In addition, since each edge of $G^\prime$ comes close to a leaf of $T$ at most twice, an edge of $G$ comes close to a vertex of $U$ at most six times.
\end{proof}
Now, we have all the required tools to prove Theorem \ref{mainthm}.
\subsubsection{Proof of Theorem \ref{mainthm}:}

Let $F$ be a face of $\Gamma_{H}$. Let $W_i: 1\leq i \leq a$ be facial walks inside $F$ with isolated vertices and let $W_i: a+1\leq i \leq a+b$ be facial walks inside $F$ that involve more than one vertex. 
Construct an inner $\eps$-approximation $F'$ of $F$ using Lemma \ref{nest-approx}. Let $W_i^\prime$ be the orthogonal polygon that approximates $W_i:1 \leq i \leq a+b$. Since $|W_i^\prime|\leq \max \{4, |W_i|+l\}$, we have $|F^\prime|\leq \sum\limits^{a+b}_{i=1}|W_i|+a+b$.  Now, partition $F^\prime$ into rectangles using at most $n/2+h-1$ rectangles in time $O(n^{3/2}\log n) $ \cite{rect}, where $n$ is the number of vertices and $h$ is the number of holes. So in our case, the number of rectangles will be $\frac{|F^\prime|}{2}+a-1\leq \frac{\sum\limits^{a+b}_{i=1}|W_i|+a+b}{2}+a-1\leq \frac{\sum\limits^{a+b}_{i=1}|W_i|}{2}+\frac{3a+b}{2}-1$. Place a vertex at the center of each rectangle. Construct a graph $K$ by joining the vertices of adjacent rectangles (we call two rectangles adjacent if they share one side) if the line segment joining the respective centers lies inside $F^\prime$ and the edge joining them has at most one bend. Note that $G^\prime$ is a connected graph. Let $T$ be a spanning tree of $G^\prime$. Then $T$ has $\frac{\sum\limits^{a+b}_{i=1}|W_i|}{2}+\frac{3a+b}{2}-1$ vertices and each edge of $T$ has at most four bends. Now, for each facial walk $W_i: 1\leq i\leq a$, add the corresponding isolated vertex as a leaf to $T$. For each facial walk $W_i: a+1\leq i\leq a+b$, add a new vertex near to $W_i$ as a leaf of $T$. This adds $a+b$ vertices to $T$ and now, the number of vertices of $T$ is $|V(T)|=a+b+\frac{\sum\limits^{a+b}_{i=1}|W_i|}{2}+\frac{3a+b}{2}-1=\frac{\sum\limits^{a+b}_{i=1}|W_i|}{2}+\frac{5a+3b}{2}-1$. 
\par Construct the multigraph $G_F$ induced by the vertices lying inside or on the boundary of $F$, by contracting each facial walk of $F$ to a single vertex. Draw $G_F$ along $T$ using Lemma \ref{lemma3}. Note that the vertices corresponding to facial walks (inside $F$) are drawn at fixed locations. Here, each edge of $G_F$ has at most $18k|V(T)|=18*4*|V(T)|=72$ bends. 

Now, we reconstruct the edges between $G_F$ and the non-isolated boundary components of $F$, following the same method as in Theorem 1, \cite{partial}. That is, by creating a buffer zone in between $F^\prime$ and $F$, the above mentioned edges are routed through the zone. This adds at most $|W_i|+5$ bends for each edge. 

Next, we have to add the edges of $G\setminus H$ that belong to $F$ according to the given embedding $\mathcal E_G$ of $G$. By Lemma \ref{hamgraph}, an edge can come close at most six times to a vertex in $U$ and thus an edge needs at most $6(|W_i|+5)=6|W_i|+30$ bends to go around $W_i$. So altogether there are at most $6\sum\limits_{i=a+1}^{a+b}|W_i|+30b$ bends along the whole edge to go around all the $W_i'$s. Since we started with $18\times 4|V(T)|=72|V(T)|$ bends (Lemma \ref{hamgraph}) for each edge, this number increased to at most $6\sum\limits_{i=a+1}^{a+b}|W_i|+30b+72|V(T)|$. Thus the total number of bends per edge can be calculated as follows.

\begin{align*}
6\sum\limits_{i=a+1}^{a+b}|W_i|+30b+72|V(T)|&\leq 6\sum\limits_{i=a+1}^{a+b}|W_i|+30b+72\left(\frac{\sum\limits^{a+b}_{i=1}|W_i|}{2}+\frac{5a+3b}{2}-1\right)\\
&\leq 41\sum\limits_{i=a+1}^{a+b}|W_i|+180a+138b-72\\
 &\hspace{24ex}\text{since  }|W_i|=1\text{ for }1 \leq i \leq a\\        
&\leq 41\times 2|V(H)|+180\times|V(H)|\text{ since } \\
&\hspace{5ex}\sum\limits_{i=a+1}^{a+b}|W_i|\leq 2|V(H)| \text{ and } a+2b\leq |V(H)| \text{\cite{partial}}\\  
&\leq 262|V(H)|
\end{align*}

\orthodrawing*

\begin{proof}
We first create a planar Kandinsky drawing $\Gamma^K_G$ of the given graph $G$ using Theorem \ref{mainthm}. Let $u$ be a vertex of $G$. Since $G$ has an orthogonal drawing, we have that $\deg(u)\leq 4$. 
Note that, in $\Gamma^K_G$, some of the edges incident to $u$ may be attached to the same port. Our goal is to change the port to which some of the edges are attached, in such a way that every edge is attached to a different port, while respecting the rotation at $u$ in $\mathcal{E}_G$.  Note that we only reroute $G$-edges, as $H$-edges have a fixed drawing and can therefore no two $H$-edges can attach to the same port of a vertex.  Since~$\Gamma_G$ is an orthogonal drawing extension,~$\mathcal E_G$ satisfies the port constraints, and such a rerouting can be achieved as illustrated in Fig.~\ref{fig:orthplnr}.  Note that this adds at most four bends per edge.

Applying this operation to all the vertices of $G$ yields a planar orthogonal drawing $\Gamma_G$ of $G$ that realizes $\mathcal{E}_G$, extends $\mathcal{H}$, and has at most $270|V(H)|$ bends per edge (at most twice four additional bends on each edge). 
\end{proof}

\subsection*{Complete Proofs for the Claims in Theorem~\ref{thm:bend-optimal-hardness}}

\clause*

\begin{proof}
	Suppose, for a contradiction, that the clause edge is drawn
	with two bends, but all three literal boxes are drawn top.  Then,
	starting from the center of $L_1$, the clause edge must first
	intersect the bottom or the right side of $L_1$.  If it intersects
	the bottom side, then it further consists of a horizontal segment
	and a vertical segment that then ends at the center of $L_3$.  But
	then either the horizontal segment cuts horizontally through $T_1$,
	or the vertical segment cuts vertically through $R_2$.  Both cases
	contradict the assumption that the drawing is without crossings.
	Hence we can assume that the clause edge intersects the right side
	of~$L_1$.  Since it cannot intersect the left side of $L_2$, there
	must be a bend on the segment between the centers of $L_1$ and~$L_2$
	that lies outside of these two boxes.  The rest of the clause edge
	is then drawn from this bend with one additional bend to $L_3$.
	However, then this part of the edge either cuts horizontally through
	$L_2$, or it intersects the left side of~$L_3$; in either case, the
	edge has a crossing.
	
	On the other hand, we show that if at least one of $L_1,L_2,L_3$ is
	not drawn top, then we can draw the clause edge with two bends.
	Assume that $L_3$ is drawn bottom.  Depending on whether the
	top-left or bottom-right corner of $L_1$ is used, we can draw the
	clause edge as indicated by the solid or the dashed curve in
	Fig.~\ref{fig:clause}b.  Note that this is independent of whether
	$L_2$ is drawn top or bottom.  Now assume that $L_3$ uses its
	top-left corner.  If~$L_1$ is drawn bottom, we can draw the clause
	edge as indicated by the solid curve in Fig.~\ref{fig:clause}c.
	Finally, if both $L_1$ and~$L_3$ use their top-left corner, but
	$L_2$ does not, we can route the clause edge as indicated by the
	dotted curve in Fig.~\ref{fig:clause}c.
\end{proof}

\equivalence*

\begin{proof}
	Assume we are given a satisfying assignment of~$\varphi$.  For each
	variable, we draw the odd boxes bottom and the even boxes top if the
	variable is assigned value true, and the other way around if it is
	false.  For each clause $C$, let $x$ be a variable that satisfies
	it.  We discuss the case that $C$ contains only positive literals,
	the case that it only contains negative literals is symmetric.  We
	draw $C$ in such a way that the input box of $x$ is drawn bottom and
	all other input boxes are drawn top.  We draw the boxes of the pipe
	gadget that connects $x$ to $C$ bottom, and the remaining pipe
	gadgets that connect to other variables to $C$ top.  Note that the
	latter cannot cause crossings, and the former do not cause a
	crossing, since it only intersects with an odd box of the variable
	gadget of $x$, which is drawn bottom since $x$ is true.  As observed
	above, the clause edge of $C$ can be drawn with two bends.
	Altogether, we obtain a crossing-free orthogonal drawing~$\Gamma_G$ of
	the instance that has $\beta+\gamma$ bends (one bend per box, and one
	additional bend per clause).
	
	Conversely, assume that there exists a drawing~$\Gamma_G$ with $\beta+\gamma$
	bends.  Recall that each box requires at least one bend, and each
	clause edge requires at least two bends.  It follows that each
	clause edge is drawn with two bends, and that each edge of the
	remaining edges is drawn with one bend.  We now assign a variable
	$x$ the value true if and only if its odd boxes are drawn bottom.
	Let $C$ be a clause with only positive literals; the case with only
	negative literals is symmetric.  Since the clause edge of $C$ is
	drawn with two bends, it follows that at least one of the input
	boxes is drawn bottom.  Then all boxes of the corresponding pipe are
	also drawn bottom, and therefore an odd box of the corresponding
	variable is also drawn bottom.  Hence the variable is true and $C$
	is satisfied.
\end{proof}

\end{document}